\Crefname{equation}{Eq}{Eqs}
\newcommand{\C}{\mathcal{C}}
\newcommand{\RR}{\mathbb{R}}
\newcommand{\ZZ}{\mathbb{Z}}
\let\svthefootnote\thefootnote
\newcommand\blankfootnote[1]{%
  \let\thefootnote\relax\footnotetext{#1}%
  \let\thefootnote\svthefootnote%
}
\newtheorem{theorem}{Theorem}[section]
\newtheorem{lemma}[theorem]{Lemma}
\newtheorem{proposition}[theorem]{Proposition}
\theoremstyle{remark}
\newtheorem{definition}[theorem]{Definition}
\newtheorem{example}[theorem]{Example}
\newtheorem{remark}[theorem]{Remark}
\newtheorem{notation}[theorem]{Notation}
\def\D{{\mathcal D}}
\def\RR{{\mathbb R}}
\def\C{{\mathcal C}}
\def\G{{\mathcal G}}
\def\K{{\mathcal K}}
\def\O{{\mathcal O}}
\def\SC{{\mathcal S}}
\def\X{{\mathcal X}}
\def\ZZ{{\mathbb Z}}
\newcommand{\Aut}{\operatorname{Aut}}
\newcommand{\id}{\operatorname{id}}
\newcommand{\Ord}{\operatorname{Ord}}
\newcommand{\Vertices}{\operatorname{Vert}}
\newcommand{\Sym}{\operatorname{Sym}}
\def\namedlabel#1#2{\begingroup
   \def\@currentlabel{#2}%
   \label{#1}\endgroup
}
\title{\LARGE A group theoretic approach to model comparison with\\ simplicial representations}
\author{\normalsize Sean T. Vittadello\footnote{School of Mathematics and Statistics and School of BioSciences, The University of Melbourne, Parkville, Victoria 3010, Australia\label{VS}}\textsuperscript{,}\renewcommand*{\thefootnote}{\fnsymbol{footnote}}\footnote[1]{Corresponding author: sean.vittadello@unimelb.edu.au}\renewcommand*{\thefootnote}{\arabic{footnote}} and Michael P.H. Stumpf\textsuperscript{\ref{VS}}}
\begin{document}
\captionsetup[figure]{name={Figure}}
\makeatletter
\begin{titlepage}
\thispagestyle{specialfooter}
\blankfootnote{\textbf{Key words and phrases}: Model comparison, model similarity, model equivalence, simplicial complex, group action, orbit space}
\centering
\@title\\
\vspace{1.5cm}
\@author\\
\vspace{1.5cm}
\cleanlookdateon
\vspace{0mm}
\begin{abstract}
\begin{normalsize}\noindent
The complexity of biological systems, and the increasingly large amount of associated experimental data, necessitates that we develop mathematical models to further our understanding of these systems. Because biological systems are generally not well understood, most mathematical models of these systems are based on experimental data, resulting in a seemingly heterogeneous collection of models that ostensibly represent the same system. To understand the system we therefore need to understand how the different models are related to each other, with a view to obtaining a unified mathematical description. This goal is complicated by the fact that a number of distinct mathematical formalisms may be employed to represent the same system, making direct comparison of the models very difficult. A methodology for comparing mathematical models based on their underlying conceptual structure is therefore required. In previous work we developed an appropriate framework for model comparison where we represent models, specifically the conceptual structure of the models, as labelled simplicial complexes and compare them with the two general methodologies of comparison by distance and comparison by equivalence. In this article we continue the development of our model comparison methodology in two directions. First, we present a rigorous and automatable methodology for the core process of comparison by equivalence, namely determining the vertices in a simplicial representation, corresponding to model components, that are conceptually related and the identification of these vertices via simplicial operations. Our methodology is based on considerations of vertex symmetry in the simplicial representation, for which we develop the required mathematical theory of group actions on simplicial complexes. This methodology greatly simplifies and expedites the process of determining model equivalence. Second, we provide an alternative mathematical framework for our model-comparison methodology by representing models as groups, which allows for the direct application of group-theoretic techniques within our model-comparison methodology.
\end{normalsize}
\end{abstract}
\end{titlepage}
\makeatother

\newpage

\section{Introduction}
Describing biological systems mathematically remains a challenge. We can collect data at unprecedented scales \cite{Howe2008,Marx2013,ISB2018,Mahmud2021}, but we still lack the models required to integrate the often heterogeneous data into a single analytical and predictive framework. Mathematical models can achieve this, but also much more. They are of fundamental importance in developing new knowledge of these complex systems for multiple reasons: analysis and interpretation of large data sets; simulations of the biological system, or a subsystem thereof, to develop new hypotheses; and, guiding new experimental design \cite{Tomlin2007,Sneddon2010,Gunawardena2014,Wolkenhauer2014,Torres2015,Pezzulo2016,Transtrum2016,Banwarth_Kuhn2020,King2021}.

Formally, a \emph{model} is an abstraction of an observable phenomenon \cite{Rosenblueth1945}. Models may assume many different forms, and here we consider two general classes, conceptual models and mathematical models \cite{Torres2015}. A \emph{conceptual model} is a qualitative representation of the reference system under investigation, consisting of the relevant concepts which correspond to observable objects or phenomena, and the interconnections between these concepts. A conceptual model is obtained by integrating our knowledge of a system into a descriptive framework, with a range of possible levels of conceptual detail depending on our knowledge of the system. For example, in a developmental-patterning system two possible concepts are a \emph{morphogen} and \emph{diffusion of the morphogen}, which are dependent concepts in the system since the morphogen undergoes diffusion. Some developmental-patterning systems may have a second morphogen, which is unrelated to the first morphogen since they are independent concepts. A \emph{mathematical model} is a quantitative representation of a conceptual model, and hence a representation of the reference system, in the formalism of mathematical concepts. Mathematical models are based on conceptual models through specification of the relevant mathematical concepts and their relationships. Mathematical models allow for the calculation of quantifiable concepts of the system, for example the spatial distribution of the morphogen concentration as a function of time.

Comparison of mathematical models based on their underlying conceptual structure is of fundamental importance for understanding the corresponding biological systems. Many aspects of complex biological systems are not well understood, so the development of mathematical models of such systems necessarily follows an inductive process employing experimental data which provides limited constraint for the range of possible conceptual models, resulting in non-uniqueness of the mathematical models \cite{Karplus1977,Babtie2014}. While mathematical models make conceptual models more specific and precise, such specificity may provide room for many alternative mathematical representations. Indeed, there are many possible formulations of a quantitative mathematical model from a qualitative conceptual model, depending on the particular quantitative description required, and alternative mathematical models can appear very distinct and unrelated since the underlying concepts may be obfuscated by the mathematical formalism or may be interpretable in different ways. Further, while the mathematical model is intended to be in direct correspondence with the conceptual model, and therefore the reference system, the particular formalism of the mathematical model introduces many implicit connections between the mathematical concepts in the mathematical model. Some of these interconnections may not be intended, and may be inconsistent with the underlying conceptual model; therefore the mathematical model may not correspond directly to the conceptual model. It is therefore essential to consider carefully whether the underlying conceptual model of the mathematical model represents the reference system. Indeed, since the mathematical formalism may introduce conceptual interconnections unrelated to the original system, care must be taken to ensure that the model is not a weak or incorrect representation of the system of interest.

Because of the possible multiplicity, different mathematical formalisms, and misrepresentation of mathematical models associated with a biological system, a general methodology for model comparison is required in order to further understanding of the system underlying the different types of models \cite{Gay2010,Henkel2018}. Model comparison is therefore of fundamental importance in understanding the similarities and differences between models and between data sets, and consequently in understanding biological systems. While there are many approaches to compare the quantitative outputs of mathematical models, see for example \cite{Tapinos2013} and \cite{Cabbia2020}, such comparisons provide little insight into the underlying conceptual structure of the mathematical models and hence of the corresponding biological systems. Neither can we rely on model selection \cite{Kirk2013} or robustness analysis, as this also relies on a comparison of the outputs of different models. These approaches cannot be used \emph{a priori} to determine conceptual similarities, equivalences, or, indeed, differences between mathematical models.

Model comparison can simplify the landscape of mathematical modelling by organising models into categories of models that are closely related. For example, models in the same category may be incremental variations of each other, but share some fundamental qualitative characteristic. We are interested in this fundamental level of agreement: the conceptual basis of a model. Or alternatively, the set of design principles \cite{Barnes2011,Brophy2014} shared by all models in the same category.  The mathematical models are merely tools for quantitative computations, and the mathematical formalism can in fact be a hindrance to the model comparison at the conceptual level.

In Vittadello and Stumpf \cite{Vittadello2021b} we present a framework for model comparison where models are represented as labelled simplicial complexes. Models are then compared in terms of their simplicial representations using two general methodologies: the first is comparison by distance, where the difference between models is measured in terms of the differences between the corresponding simplicial representations; the second is comparison by equivalence, where we identify any equivalences between the components of the simplicial representations of different models, and then employ particular operations on the complexes to transform one into the other --- such equivalence, when it exists, reveals common underlying characteristics of the models. Our methodology for model comparison allows for any model to be represented and compared within our framework, irrespective of formalism, granularity, and level of abstraction. To construct the simplicial representation of a model, we must first define the components of the model that we would like to represent, and for this we can fall back on relevant domain expertise. Depending on the level of detail required, the simplicial representation of a model may not be unique. One main outcome of our model-comparison methodology is for models of developmental-pattern formation \cite{Green2015,Scholes2019}, where we demonstrate that the Turing-pattern activator-inhibitor model is equivalent to the positional-information annihilation model from a significant conceptual perspective.

In this article we continue the development of our methodology for model comparison \cite{Vittadello2021b} in two respects. First, we develop a rigorous and automatable methodology for the main process when comparing models with respect to equivalence: the determination of the vertices that are conceptually related in a simplicial representation and the subsequent identification of these vertices via simplicial operations. We say that vertices are conceptually related when their labels, corresponding to model components, are conceptually related. Without automatability, establishing which vertices are conceptually related in a simplicial representation is challenging if not unfeasible, even for relatively small models. This methodology is based on considerations of vertex symmetry in the simplicial representation, and we establish the necessary mathematical theory for group actions on simplicial complexes. The methodology provides a major step toward simplifying and facilitating the process of determining model equivalence, which in turn can assist in the unification of models for a biological system and the discovery of design principles for synthetic biology. Second, we present a concise description of an alternative mathematical framework for our model-comparison methodology where we represent models as groups. This allows for the direct application of group-theoretic techniques within our model-comparison methodology, in addition to the current algebraic-topological techniques associated with simplicial complexes.

The remainder of this article is organised as follows. In Section~\ref{sec:Background} we provide an overview of the required background material: Subsection~\ref{subsec:Topology} outlines the required background in algebraic topology; we then describe our methodology for model comparison by equivalence in Subsection~\ref{subsec:ModelEquiv}. We present and discuss our new results in Section~\ref{sec:Discussion}. In particular, in Subsection~\ref{subsec:Overview} we provide a descriptive overview of our approach for identifying equivalent components in simplicial representations through group actions, and then develop this mathematical theory in Subsection~\ref{subsec:Gaction}; we present a step-by-step methodology for determining model equivalence in Subsection~\ref{subsec:method}, which employs our automatable theory from Subsection~\ref{subsec:Gaction}; we then apply our methodology to a collection of Turing-pattern and positional-information models (see the Appendix in Section~\ref{sec:Appendix} for the model details) in Subsection~\ref{subsec:example}; in Subsection~\ref{subsec:Grep} we give an incisive description of an alternative mathematical framework for our model-comparison methodology by representing models as groups, which we call $G$-representations. Finally, in Section~\ref{sec:Conclusion}, we conclude with a discussion of the utility of our new theory for comparing models.

\section{Background} \label{sec:Background}
In this section we provide a brief and informal description of the necessary background in simplicial algebraic topology; further details are available from standard references \cite{Spanier1966,Rotman1988,Munkres2018}. We also discuss our methodology for model comparison by equivalence as described in Vittadello and Stumpf \cite{Vittadello2021b}, and developed further in this article.

\subsection{Algebraic topology}\label{subsec:Topology}
Geometrically a \emph{$p$-simplex} is a generalisation of a filled triangle to an arbitrary dimension $p$, whereby a point is a 0-simplex, a line segment is a 1-simplex, a filled triangle is a 2-simplex, a filled tetrahedron is a 3-simplex, and so forth. We may regard 0-simplices as vertices and 1-simplices as edges, so that we consider a simple graph as a collection of 0-simplices and 1-simplices. A \emph{$k$-face} of a simplex is a $k$-dimensional subsimplex, and the 0-faces (or vertices) of a simplex \emph{span} the simplex. If the simplex $\tau$ is a face of the simplex $\sigma$ then we write $\tau \le \sigma$. A face $\tau$ of a simplex $\sigma$ is \emph{proper} if $\tau \ne \sigma$.

A \emph{simplicial complex} is a generalisation of a simple graph, whereby simplicies of dimension higher than one represent higher-dimensional interactions between the vertices. Specifically, a simplicial complex $K$ consists of a set of simplices such that: if $\sigma$ is a simplex in $K$ then every face of $\sigma$ is also in $K$; and, the nonempty intersection of any two simplices in $K$ is a simplex in $K$. We denote the set of vertices of $K$ as $\Vertices (K)$. A \emph{simplicial subcomplex} $L$ of a simplicial complex $K$ is a collection $L \subseteq K$ that is also a simplicial complex. A simplicial subcomplex $L$ of a simplicial complex $K$ is a \emph{full subcomplex} if every simplex of $K$ that has all its vertices in $L$ is also a simplex of $L$. The \emph{$k$-skeleton} of a simplicial complex $K$ is the subcomplex $K^{(k)}$ consisting of the simplices in $K$ with dimension at most $k$. In particular, the 0-skeleton $K^{(0)}$ is the set of vertices and the 1-skeleton $K^{(1)}$ is the \emph{underlying graph} of the simplicial complex $K$.

In some contexts it is useful to define the \emph{empty simplex}, denoted $\emptyset$, as a $(-1)$-simplex, and then the corresponding simplicial complex is the \emph{empty simplicial complex}, denoted $\{ \emptyset\}$, which has dimension $-1$. In this work we exclude the empty simplex, as it is not required. We include, however, the \emph{void simplicial complex}, which is the simplicial complex consisting of the empty collection of sets. We denote the void simplicial complex by $\emptyset$, which has the assigned dimension of $-\infty$. Further details on these degenerate cases are found in \cite[Page 8, Remark 2.3]{Kozlov2008}.

We shall move freely between the interpretations of simplicial complexes as both geometric and abstract objects, since they are equivalent and both contain the same combinatorial information; and we have no need for the topological aspects of geometric simplicial complexes. Notationally, if $v_1, v_2, \ldots, v_n$ are vertices that span a simplex $K$ then we denote $K$ as $\{v_1, v_2, \ldots, v_n \}$ when we want to emphasise the spanning vertices. Note that vertices are therefore denoted as both $v_i$ and $\{v_i\}$.

\subsection{Model comparison by equivalence} \label{subsec:ModelEquiv}
We refer to a specific conceptual detail of a given model as a \emph{component} of the model. Every model consists of a finite number of components and the conceptual interconnections between these components. For example, a Turing-pattern system of reaction-diffusion equations is a model of a developmental-patterning process, and includes components such as morphogens, diffusion, reactions, boundary conditions, and a morphogen gradient, along with the interconnections between particular components such as a morphogen and its diffusion.

To compare a collection of models we first determine the set of all model components, at the required level of conceptual detail, associated with the models in the collection. The model components are the scientific concepts represented within the model, interpreted with knowledge of the relevant scientific domains.

\begin{definition}[\bf{Model components, generated model}]\label{def:Model components}
Let $\C$ be the set of all required components from the collection of models under consideration for comparison. We say that each model in the collection is \emph{generated} by a subset of components from $\C$.
\end{definition}

We represent each model in the collection for comparison as a labelled simplicial complex, which we call a \emph{simplicial representation}, where each model component is represented as a labelled 0-simplex. The label associated with a 0-simplex corresponds to the relevant concept in the model, and for simplicity and ease of automation could have a numerical label:

\begin{notation}[\bf{Representations of model components}]\label{not:Model components}
Let $\Ord \colon \C \to \{1, 2, \ldots, |\C|\}$ be a bijection that specifies an arbitrary order for the categorical data elements in $\C$. We represent each model component from $\C$ as a 0-simplex that is labelled with the name of the model component or the corresponding positive integer under $\Ord$.
\end{notation}

\begin{definition}[\bf{Simplicial representation}]\label{def:SimpRep}
Let $\C$ be the set of all required components from the collection of models under consideration for comparison. To each model generated by a subset of components from $\C$ we associate a labelled simplicial complex, which we call a \emph{simplicial representation}, where the labelled 0-simplices correspond to components of the model, and the one- and higher-dimensional simplices correspond to the conceptual interconnections between the model components as determined by the model.
\end{definition}

The ordering function $\Ord$ provides for efficient labelling of simplices. For example, a 2-simplex can be labelled as $\{1, 2, 3\}$, where the 0-simplices are $\{1\}$, $\{2\}$, and $\{3\}$, and the 1-simplices are $\{1,2\}$, $\{1,3\}$, and $\{2,3\}$.

For a particular level of component detail of a given model we associate a simplicial representation consisting of labelled 0-simplices, which represent the model components, along with the one- and higher-dimensional simplices that represent the conceptual interconnections between the components: the dyadic interactions are 1-simplices, the triadic interactions are 2-simplices, and so forth as required. The labelling of the 0-simplices induces a labelling of the higher-dimensional simplices through their spanning 0-simplices. Since the models for comparison are generated by the same set of components $\C$, a particular labelled simplex representing specific components and interconnections can be identified within different simplicial representations associated with the models.

Our definition of the equivalence of two models is in terms of a formal equivalence between the two corresponding simplicial representations. The formal equivalence is based on partial operations on the relevant set of simplicial representations, which we make explicit here.

\begin{definition}[\textbf{Partial operation on a set}]
Let $\X$ be a set. A \emph{partial operation} on $\X$ is a partial unary function $f \colon \X \rightharpoonup \X$. Denoting the domain of definition of $f$ as $\D \subseteq \X$, we say that $f$ is \emph{invertible} if there exists a partial operation $g \colon \X \rightharpoonup \X$ such that $(g \circ f) |_{\D} = \id_{\D}$ and $(f \circ g) |_{f(\D)} = \id_{f(\D)}$.
\end{definition}

Note that the domain of definition of a partial operation may have cardinality one.

\begin{definition}[\textbf{Equivalence of simplicial representations}]\label{def:Equiv}
Let $\C$ be the set of all components that appear in the collection of models under consideration for comparison, and let $\SC$ be a collection of simplicial representations corresponding to models generated by subsets from $\C$. Further, let $\O$ be a set of partial operations on $\SC$. Two simplicial representations $K$, $L \in \SC$ are \emph{equivalent} with respect to $\O$ if and only if there exists a (possibly empty) sequence of invertible partial operations $( f_i )_{i=0}^n$ in $\O$ such that $f_n \circ \cdots \circ f_1 \circ f_0 (K) = L$.
\end{definition}

\begin{remark}
Note that the equivalence of simplicial representations is always relative to the set $\O$ of partial operations on $\SC$. The choice of the set $\O$ determines which model components will be considered as conceptually equivalent. We employ partial operations on $\SC$ that we call \emph{admissible}, meaning that they only relate model components that we consider to be conceptually equivalent. We discuss this further below when we define the operations that we employ.
\end{remark}

\begin{remark}
It follows from our notion of equivalence of simplicial representations in Definition~\ref{def:Equiv} that the set $\{\, (K,L) \in \SC \times \SC \mid \text{$K$ and $L$ are equivalent} \,\}$ is an equivalence relation on $\SC$.
\end{remark}

\begin{remark}\label{rem:AltEquiv}
Note that the equivalence of two simplicial representations $K$, $L \in \SC$ can also be defined as follows: $K$ and $L$ are \emph{equivalent} if and only if there exist two (possibly empty) sequences of invertible partial operations $( f_i )_{i=0}^m$ and $( g_i )_{i=0}^n$ on $\SC$ such that $f_m \circ \cdots \circ f_1 \circ f_0 (K) = g_n \circ \cdots \circ g_1 \circ g_0 (L)$. In this work we use the fact that this form of equivalence of simplicial representations implies the form in Definition~\ref{def:Equiv}.
\end{remark}

We previously defined five partial operations that we use to establish equivalence of simplicial representations, namely: (Operation 1) Adjacent-vertex identification, (Operation 2) Nonadjacent-vertex identification, (Operation 3) Vertex split, (Operation 4) Inclusion, and (Operation 5) Vertex substitution. These five partial operations are each induced by a map between simplicial complexes, and are all invertible for suitable domains of definition: an adjacent-vertex identification is mutually inverse with a corresponding vertex split; a nonadjacent-vertex identification is mutually inverse with an inclusion; and, a vertex substitution is mutually inverse with another vertex substitution. From the perspective of Remark~\ref{rem:AltEquiv} we will only explicitly consider Operations 1, 2, and 5, while Operations 3 and 4 will be implicit.

For the equivalence of simplicial representations to be conceptually meaningful we need to ensure that the employed partial operations are themselves conceptually meaningful, or \emph{admissible}, meaning that the operations preserve the conceptual representation of the general physical system. Importantly, an established equivalence between models is based on the components of the models that we choose to identify as equivalent, and therefore the corresponding partial operations that we consider admissible. The equivalence of two models is therefore determined by the formal requirements in terms of partial operations, along with tight domain-specific constraints.

We now state Operations 1, 2, and 5. Further details are in Vittadello and Stumpf \cite{Vittadello2021b}. Recall that two distinct vertices in a simplicial complex are \emph{adjacent} if they belong to the same simplex, and for a vertex $u$ in a simplicial complex $K$ we denote the set of all vertices adjacent to $u$ as $V_{K} (u)$, noting that $u \notin V_{K} (u)$. Further, let $\C$ be the set of all components of models under consideration, and let $K$ and $L$ be two simplicial representations with labels from $\C$.

\begin{definition}[\textbf{Operation 1: Adjacent-vertex identification}]\label{def:Op1}
Let $u$ and $v$ be a pair of adjacent vertices in $K$ such that the following all hold:
\begin{itemize}[itemsep=5pt,topsep=5pt]
\item $V_{K} (u) \setminus \{v\} = V_{K} (v) \setminus \{u\}$.
\item For any nonempty subset $W \subseteq V_{K} (u) \setminus \{v\}$, the vertices $W \cup \{ u \}$ span a simplex in $K$ if and only if the vertices $W \cup \{ v \}$ span a simplex in $K$.
\end{itemize}
A simplicial map $\pi_1 \colon K \to L$ is an \emph{adjacent-vertex identification} if $\pi_1$ is surjective, and is injective and label preserving on every vertex except at the pair of vertices $u$ and $v$ which are mapped to a single vertex $c \in L^{(0)}$.
\end{definition}

\begin{definition}[\textbf{Operation 2: Nonadjacent-vertex identification}]\label{def:Op2}
Let $u$ and $v$ be a pair of nonadjacent vertices in $K$ such that the following all hold:
\begin{itemize}[itemsep=5pt,topsep=5pt]
\item $V_{K} (u)  = V_{K} (v) $.
\item For any nonempty subset $W \subseteq V_{K} (u)$, the vertices $W \cup \{ u \}$ span a simplex in $K$ if and only if the vertices $W \cup \{ v \}$ span a simplex in $K$.
\end{itemize}
A simplicial map $\pi_2 \colon K \to L$ is a \emph{nonadjacent-vertex identification} if $\pi_2$ is surjective, and is injective and label preserving on every vertex except at the pair of vertices $u$ and $v$ that are mapped to a single vertex $c \in L^{(0)}$.
\end{definition}

\begin{definition}[\textbf{Operation 5: Vertex substitution}]\label{def:Op5}
A simplicial map $\pi_5 \colon K \to L$ is a \emph{vertex substitution} if $\pi_5$ is bijective and preserves all labels except for one whereby the labelled vertex $u \in K^{(0)}$ is mapped to the labelled vertex $c \in L^{(0)}$.
\end{definition}

We may extend each partial operation to effect, for example, multiple vertex identifications or substitutions by extending the assumptions in Operations 1, 2, and 5.

\begin{remark}
We note here that our model-comparison methodology is designed to be flexible, so that the partial operations can be chosen based on specific requirements. Different operations may produce different model equivalences, however operations that correspond to very weak conceptual similarity may yield conceptual equivalences that are not scientifically meaningful.

The specific partial operations on simplicial representations that we employ (vertex identifications and substitution) are chosen to represent our notion of conceptual similarity for the models and underlying biological systems that we consider here. Of primary interest for our work are the vertex-identification operations, which identify vertices in a simplicial representation corresponding to conceptually-related model components; the identification of such vertices in a simplicial representation in a rigorous and automatable manner is the subject of Section~\ref{sec:Discussion}. Additional partial operations may be useful for grouping models based on equivalence, however the operations that we consider provide much flexibility while preserving conceptual equivalence.
\end{remark}

\section{Results and discussion}\label{sec:Discussion}
Here we have two main objectives. The first is to develop an automatable methodology for identifying equivalent components in models. The second is to provide a representation of models as groups, as an alternative to simplicial representations, with which models can be compared in a manner equivalent to that with simplicial representations. We begin with an overview of our approach for formalising the identification of equivalent model components with group actions.

\subsection{Overview}\label{subsec:Overview}
The most difficult aspect of establishing the equivalence of two simplicial representations is in determining whether two adjacent or two nonadjacent vertices in one simplicial representation can be identified conceptually with a vertex in the other simplicial representation, either via a vertex identification or vertex split operation. Identifiable vertices in a simplicial representation must be in positions that are conceptually equivalent, and in particular in symmetric positions within the corresponding unlabelled simplicial complex. We can therefore employ group actions on simplicial complexes to greatly simplify and expedite the process of determining the existence of identifiable vertices, while also providing for the process to be automated. The existence of a relevant nontrivial group action then induces a vertex-identification operation.

Given a simplicial representation, our goal is to determine whether any pairs of vertices are in symmetrical positions in the simplicial complex, and so are candidates for a possible vertex-identification operation. We therefore need to determine whether the simplicial complex has any symmetries of a certain class. By a symmetry of a simplicial complex we mean a permutation on the set of vertices of the complex that acts simplicially, so sends simplices to simplices, and that sends the whole complex to itself. We are interested in a class of symmetries, which we may call \emph{exchange symmetries}, of the complex that exchange two vertices and leave the other vertices fixed, since these symmetries reveal the vertices that are candidates for a possible vertex-identification operation. From a set of exchange symmetries of the complex we then obtain a new simplicial complex in which the exchangeable vertices are identified by a vertex-identification operation; this process can be iterated until a simplicial complex is obtained for which no exchange symmetries exist, and only relabelling of the vertices need be considered by a vertex-substitution operation.

\subsection{Identification of equivalent model components through group actions}\label{subsec:Gaction}
We begin by establishing the relevant symmetries of simplicial complexes, which are described by a group action. Recall the definition of a permutation of a set:

\begin{definition}[\textbf{Permutation, Transposition}]
A \emph{permutation} of a set $X$ is a bijection from $X$ onto itself, and the set of all permutations of $X$ is the symmetric group on $X$ which we denote by $\Sym (X)$. A \emph{transposition} is a permutation that exchanges two elements of $X$ and leaves the other elements fixed.
\end{definition}

\begin{notation}
We write permutations in both function notation and cycle notation, as is convenient, and we compose permutations from right to left. In cycle notation a transposition is a 2-cycle, written $(x\; y)$ for $x$, $y \in X$.
\end{notation}

\begin{remark}
Herein our convention is to view permutations in the active sense.
\end{remark}

Formally we refer to a symmetry of a simplicial complex as an automorphism. Recall the following standard definition:

\begin{definition}[\textbf{Automorphism of a simplicial complex, Automorphism group}]
An \emph{automorphism}, or symmetry, of a simplicial complex $K$ is a permutation on the set of vertices $\Vertices(K)$ that acts simplicially. The \emph{automorphism group} of $K$ is the set of all automorphisms of $K$, denoted $\Aut (K)$, which is isomorphic to a subgroup of $\Sym \big(\Vertices(K)\big)$.
\end{definition}

Herein we denote the set of positive integers from 1 to $n$ as $[n]$.

\begin{notation}
Let $\pi$ be an automorphism of a simplicial complex $K$, and denote $\Vertices (K) = \{v_i\}_{i=1}^n$. If $\sigma := \{v_i\}_{i \in I}$ is a simplex in $K$ for some subset $I \subseteq [n]$, then we denote the action of $\pi$ on $\sigma$ by $\pi(\sigma) = \{\pi(v_i)\}_{i \in I}$. We will usually denote vertices as singleton sets of positive integers, in which case $\Vertices (K) = \big\{\{a_i\}\big\}_{i=1}^n$ where each $a_i$ is a positive integer, and we then denote the action of $\pi$ on $\sigma$ by $\pi(\sigma) = \big\{\{\pi(a_i)\}\big\}_{i \in I}$.
\end{notation}

The automorphisms of a simplicial complex are described by the action of a group on the complex. Our definition of a (left) group action on a simplicial complex is standard \cite[Chapter III, Page 115]{Bredon1972}.

\begin{definition}[\textbf{Group action}]
Let $G$ be a group with identity $e$, and let $X$ be a set. An \emph{action} of the group $G$ on $X$ is a map $\Theta \colon G \times X \to X$ such that:
\begin{enumerate}
\item $\Theta (gh,x) = \Theta (g,\Theta(h,x))$ for all $g$, $h \in G$ and $x \in X$;
\item $\Theta (e,x) = x$ for all $x \in X$.
\end{enumerate}
If $K$ is a simplicial complex then, setting $X = K$, we also assume that:
\begin{enumerate}[resume]
\item For each $g \in G$ the map $\theta_g \colon K \to K$, given by $\theta_g (\sigma) = \Theta (g,\sigma)$ for $\sigma \in K$, acts simplicially on $K$.
\end{enumerate}
We say that each $g \in G$ acts simplicially on $K$, and each such action of a group $G$ is called a \emph{$G$-action}.
\end{definition}

\begin{remark}
Equivalently, an action of the group $G$ on the set $X$ is a group homomorphism $\Phi \colon G \to \Aut (X)$. In particular, recalling that an automorphism of a simplicial complex $K$ is simplicial, a \emph{simplicial action} of the group $G$ on the simplicial complex $K$ is a group homomorphism $\Phi \colon G \to \Aut (K)$.
\end{remark}

\begin{notation}
For an action of the group $G$ on the set $X$, and for $g \in G$ and $x \in X$, we denote $\Theta (g,x) = \Phi (g) (x)$ by $g \cdot x$.
\end{notation}

While conditions of \emph{regularity} \cite[Chapter III, Page 116, Definition 1.2]{Bredon1972} are generally assumed for $G$-actions, such conditions are neither required nor desirable for our purposes since we are interested in identifying labelled simplices that, upon permutation of the spanning vertices, produce an invariant simplicial representation. Indeed, a \emph{regular} simplicial action of the group $G$ on the simplicial complex $K$ is a simplicial action that satisfies the following condition for the action of each subgroup of $G$ \cite[Chapter III, Page 115, Statement (B)]{Bredon1972}: Condition (A) --- if $g_1, g_2, \ldots, g_m \in G$, and $\{ v_1, v_2, \ldots, v_m \}$ and $\{ g_1 \cdot v_1, g_2 \cdot v_2, \ldots, g_m \cdot v_m \}$ are both simplices of $K$, then there exists an element $g \in G$ such that $g \cdot v_i = g_i \cdot v_i$ for all $i \in [m]$. Condition (A) implies the following Condition (B) \cite[Chapter III, Page 116, Statement (A$^{\prime}$) and the following two paragraphs]{Bredon1972}: Condition (B) --- if $g \in G$, $v$ is a vertex in $K$, and $v$ and $g \cdot v$ belong to the same simplex, then $v = g \cdot v$. Condition (B) illustrates that regular actions fix vertices that are sent to the same simplex. To the contrary, we require nonregular actions that allow for the permutation of the vertices of a given simplex, as illustrated in the following simple example.

\begin{example}
Let $K$ be the 1-simplex $\big\{\, \{1\}, \{2\}, \{1,2\} \,\big\}$, regarded as a simplicial representation. Since the vertices $\{1\}$ and $\{2\}$ are adjacent and in symmetrical positions in the complex, they are conceptually equivalent so may be identified by an adjacent-vertex identification. While we can directly observe from $K$ that the two vertices are conceptually equivalent, this is very difficult for more complicated simplicial representations, and the use of group actions simplifies the process. In this case, let $G = \big\langle (1\; 2) \big\rangle$ be the permutation group on the set $\{1,2\}$ generated by the transposition $(1\; 2)$, that is, $G = \big\{e, (1\; 2)\big\}$. Then $G$ acts simplicially on $K$ by permuting the vertices of $K$ (we discuss such actions in detail below). Now, $\{ 1,2 \}$ and $\big\{ e \cdot 1, (12) \cdot 2 \big\} = \{1\}$ are both simplices of $K$, however there is no $g \in G$ such that $g \cdot \{1,2\} = \{1\}$. So $G$ does not satisfy Condition (A), and hence is not a regular action. Such a group action is, however, a symmetry of interest in our work here, since the permutation of the vertices of the simplex $K$ results in an invariant complex.
\end{example}

To reveal any conceptually-equivalent vertices in a simplicial representation we require a specific class of simplicial automorphisms, which we call exchange automorphisms (see \cite[Section 3.2, Page 322, Definition 3.20]{Papadopoulos2012}).

\begin{definition}[\textbf{Exchange automorphism, Exchangeable vertices}]\label{def:Exchange}
Let $K$ be a simplicial complex, and let $u$, $v \in \Vertices (K)$. An \emph{exchange automorphism} for the vertices $u$ and $v$ is an automorphism $\phi \in \Aut (K)$ such that $\phi (u) = v$, $\phi (v) = u$, and $\phi (w) = w$ for all $w \in \Vertices (K) \setminus \{u,v\}$. If an exchange automorphism exists for vertices $u$ and $v$ then we say that $u$ and $v$ are \emph{exchangeable} in $K$.
\end{definition}

Note that if $K$ is a simplicial complex and $u$, $v \in \Vertices (K)$ are exchangeable then the associated exchange automorphism on $K$ is unique.

\begin{remark}\label{rem:ExchangeEquiv}
Let $K$ be a simplicial complex. The binary relation $\{\, (u,v) \mid \text{$u$, $v \in \Vertices (K)$ are exchangeable} \,\}$ over the set of exchangeable vertices in $K$ is an equivalence relation. Reflexivity follows from the identity automorphism on $K$, and symmetry follows from the definition of exchangeable vertices. For transitivity we observe that if $\phi \in \Aut (K)$ is an exchange automorphism for the vertices $u$ and $v$, and $\psi \in \Aut (K)$ is an exchange automorphism for the vertices $v$ and $w$, then $\phi \circ \psi \circ \phi \in \Aut (K)$ is an exchange automorphism for the vertices $u$ and $w$.
\end{remark}

We employ group actions of a particular class:

\begin{definition}[\textbf{Group action by exchangeable vertices}]
A \emph{group action by exchangeable vertices} on a simplicial complex $K$ consists of an action of a group $G$ on $K$ such that $G = \langle M \rangle$ where:
\begin{itemize}
\item $M \subseteq \Sym \big(\Vertices(K)\big)$ is a set of transpositions $(u\;v)$ where $u$ and $v$ are exchangeable in $K$; and,
\item the action of each $(u\;v) \in M$ is the exchange automorphism on $K$ that exchanges $u$ and $v$.
\end{itemize}
\end{definition}

Once identifiable vertices are found in a simplicial representation through a group action, we can obtain a new simplicial complex, called the orbit space, with these vertices identified as single vertices. The definition of the orbit space of an action of a group on a simplicial complex follows \cite[Chapter III, Page 117, Paragraph 2]{Bredon1972}.

\begin{definition}[\textbf{Orbit, Orbit space}]\label{def:Orbit}
Let $K$ be a simplicial complex, and suppose that the group $G$ acts on $K$. The \emph{orbit}, or \emph{$G$-orbit}, of an element $\sigma \in K$ is the set $G \cdot \sigma := \{\, g \cdot \sigma \mid g \in G \,\}$. The \emph{orbit space}, or \emph{quotient}, of an action of a group $G$ on $K$, denoted $K/G$, is the set consisting of the orbits $\Vertices (K/G) := \{\, G \cdot v \mid v \in \Vertices (K) \,\}$, along with the finite subsets $\{ G \cdot v_i \}_{i=1}^{n}$ of $\Vertices (K/G)$ for which $\{ u_i \}_{i=1}^{n}$ spans a simplex in $K$, where each $u_i$ is a representative of $G \cdot v_i$, and the existence of such a simplex in $K$ is not required for all systems of representatives of the orbits $G \cdot v_i$.
\end{definition}

\begin{example}
Let $K$ be the simplicial 2-complex $\big\{ \{1\}, \{2\}, \{3\}, \{1,2\}, \{1,3\}, \{2,3\}, \{1,2,3\} \big\}$, which is geometrically a filled triangle. Let $G := \big\langle (1\;2) \big\rangle \cong \ZZ / 2\ZZ$ be the cyclic subgroup of $S_3$, the symmetric group of degree 3, generated by the transposition $(1\; 2)$. Then $g \cdot K = K$ for all $g \in G$. The action of $G$ on $K$ is illustrated geometrically in Figure~\ref{fig:Fig1}(a), where $(1\; 2)$ is a reflection.
\begin{figure}[!h]
\centering\includegraphics[width=0.9\textwidth]{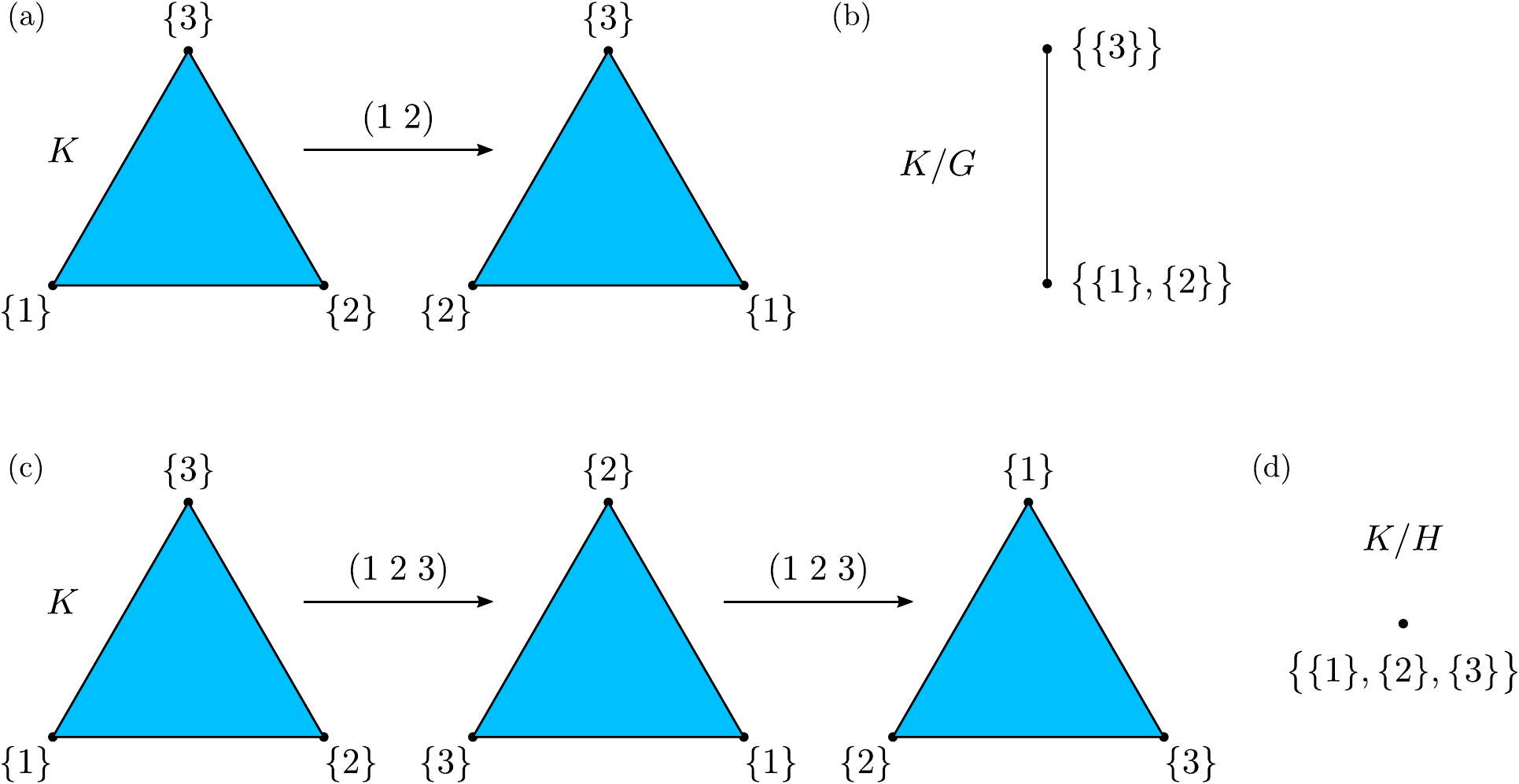}
\caption{Geometric illustration of the action of the group $G$ on the simplicial complex $K$: (a) The action of the transposition $(1\;2)$ on $K$; (b) The orbit space $K/G$ is the 1-simplex with vertices $\big\{\{3\}\big\}$ and $\big\{\{1\},\{2\}\big\}$. Geometric illustration of the action of the group $H$ on the simplicial complex $K$: (c) The action of the permutation $(1\;2\;3)$ on $K$; (d) The orbit space $K/H$ is the 0-simplex $\big\{\{1\},\{2\},\{3\}\big\}$.}
\label{fig:Fig1}
\end{figure}
We have $G \cdot \{1\} = G \cdot \{2\} = \big\{ \{1\}, \{2\} \big\}$ and $G \cdot \{3\} = \big\{ \{3\} \big\}$, hence $\Vertices (K/G) = \big\{ G \cdot \{1\}, G \cdot \{3\} \big\}$. Since $\{1,3\}$, or alternatively $\{2,3\}$, is a simplex in $K$, $\big\{ G \cdot \{1\}, G \cdot \{3\} \big\}$ is a simplex in $K/G$. Therefore $K/G = \Big\{ G \cdot \{1\}, G \cdot \{3\}, \big\{ G \cdot \{1\}, G \cdot \{3\} \big\} \Big\}$ is a 1-simplex, as illustrated in Figure~\ref{fig:Fig1}(b), resulting from the identification of the symmetric vertices $\{1\}$ and $\{2\}$ in $K$.

Now let $H := \big\langle (1\; 2\; 3) \big\rangle \cong \ZZ / 3\ZZ$ be the cyclic subgroup of $S_3$ generated by the cyclic permutation $(1\; 2\; 3)$. Then $h \cdot K = K$ for all $h \in H$. The action of $H$ on $K$ is illustrated geometrically in Figure~\ref{fig:Fig1}(c), where $(1\; 2\; 3)$ is a counter-clockwise rotation. We have $H \cdot \{1\} = H \cdot \{2\} = H \cdot \{3\} = \big\{ \{1\}, \{2\}, \{3\} \big\}$, and it follows that $K/H = \big\{H \cdot \{1\}\big\}$ is a 0-simplex, as illustrated in Figure~\ref{fig:Fig1}(d), resulting from the identification of the three vertices in $K$.
\end{example}

It is a standard observation that the orbit space $K/G$ is a simplicial complex \cite[Chapter III, Page 117, Paragraph 2]{Bredon1972}, however we provide a proof for completeness.

\begin{proposition}
Let $K$ be a simplicial complex, and let $G$ be a group action on $K$. Then the orbit space $K/G$ is a simplicial complex with vertex set $\Vertices (K/G)$.
\end{proposition}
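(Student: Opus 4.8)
The plan is to verify directly that $K/G$ satisfies the two defining conditions of a simplicial complex from Subsection~\ref{subsec:Topology} --- closure under faces and the intersection property --- while simultaneously confirming that its $0$-simplices are precisely the orbits in $\Vertices(K/G)$. Throughout I would work abstractly, treating each simplex as its set of spanning vertices, which here are orbits $G \cdot v$. The crucial feature of the orbit-space definition to keep in mind is that a finite set of \emph{distinct} orbits $\{G \cdot v_1, \dots, G \cdot v_n\}$ is declared a simplex precisely when \emph{some} system of representatives $u_i \in G \cdot v_i$ spans a simplex of $K$, with no requirement imposed on the remaining systems.

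First I would confirm the vertex set. For any $v \in \Vertices(K)$ the singleton $\{v\}$ is a simplex of $K$, and $v$ represents the orbit $G \cdot v$; hence each $\{G \cdot v\}$ is a simplex of $K/G$, so the $0$-simplices of $K/G$ are exactly the elements of $\Vertices(K/G)$, as claimed.

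The main step is closure under faces. Suppose $\sigma = \{G \cdot v_1, \dots, G \cdot v_n\}$ is a simplex of $K/G$ with the orbits pairwise distinct, and let $\{u_i\}_{i=1}^n$ with $u_i \in G \cdot v_i$ be a representative system spanning a simplex of $K$; since the orbits are distinct the $u_i$ are distinct. Given a nonempty face indexed by some $J \subseteq [n]$, the set $\{u_i\}_{i \in J}$ is a face of the simplex $\{u_i\}_{i=1}^n$ in $K$, and is therefore itself a simplex of $K$ because $K$ is closed under faces. Thus $\{G \cdot v_i\}_{i \in J}$ admits a representative system spanning a simplex of $K$, so it is a simplex of $K/G$. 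This is exactly the point where the existential ``some system of representatives'' formulation is used, and it is the only delicate part of the argument.

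Finally, the intersection property would follow formally from closure under faces: if $\sigma$ and $\tau$ are simplices of $K/G$ with $\sigma \cap \tau \ne \emptyset$, then the set-theoretic intersection $\sigma \cap \tau$ is a subset of the vertex set of $\sigma$, hence a face of $\sigma$, and so is a simplex of $K/G$ by the preceding paragraph. I expect the only genuine obstacle to be bookkeeping around the orbit-space definition --- in particular the fact that distinct vertices of $K$ may lie in a common orbit, so that the projection $K \to K/G$ can lower dimension --- but this does not interfere with the argument above, which throughout deals with sets of pairwise distinct orbits and only ever restricts a single chosen representative system to a subset of its indices.
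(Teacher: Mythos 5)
Your proof is correct and follows essentially the same route as the paper: the key step in both is the observation that a representative system $(u_i)_{i \in I}$ spanning a simplex of $K$ restricts, for any nonempty $J \subseteq I$, to a representative system $(u_j)_{j \in J}$ spanning a face of that simplex in $K$, so the existential formulation of the orbit-space definition gives closure under faces. Your additional checks of the vertex set and of the intersection property (the latter following formally from face-closure once simplices are treated abstractly as sets of orbits) are routine verifications that the paper leaves implicit.
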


\begin{proof}
By definition, $K/G$ consists of the set of vertices $\Vertices (K/G) := \{\, G \cdot v \mid v \in \Vertices (K) \,\}$ along with the subsets $\{ G \cdot v_i \}_{i \in I}$ of $\Vertices (K/G)$, for index set $I$, for which there exists $( u_i )_{i \in I} \in \prod_{i \in I} G \cdot v_i$ such that $\{ u_i \}_{i \in I}$ spans a simplex in $K$.

To show that $K/G$ is a simplicial complex we need to show that for each subset $\{ G \cdot v_i \}_{i \in I}$ of $\Vertices (K/G)$ that spans a simplex in $K/G$, every nonempty subset of $\{ G \cdot v_i \}_{i \in I}$ also spans a simplex in $K/G$. So let $\{ G \cdot v_j \}_{j \in J}$ be a subset of $\{ G \cdot v_i \}_{i \in I} \in K/G$, where $J \subseteq I$ is nonempty. Since $\{ G \cdot v_i \}_{i \in I} \in K/G$, there exists a simplex $\{ u_i \}_{i \in I} \in K$ with $( u_i )_{i \in I} \in \prod_{i \in I} G \cdot v_i$. Then $\{ u_j \}_{j \in J}$ is a subsimplex of $\{ u_i \}_{i \in I}$ in $K$, with $( u_j )_{j \in J} \in \prod_{j \in J} G \cdot v_j$, hence $\{ G \cdot v_j \}_{j \in J}$ is a simplex in $K/G$, as required.
\end{proof}

We now show that there is a canonical map from a simplicial complex onto the orbit space corresponding to a $G$-action. This simplicial map induces a vertex-identification operation between the two simplicial complexes.

\begin{proposition}\label{prop:projection}
Let $K$ be a simplicial complex, and let $G$ be a group action on $K$. Then the canonical vertex map $v \mapsto G \cdot v$ extends to a surjective simplicial map $p \colon K \to K/G$.
\end{proposition}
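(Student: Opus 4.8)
The plan is to define $p$ on vertices by the given canonical map $v \mapsto G \cdot v$ and then extend it index-wise to all simplices, after which the statement reduces to unwinding the definition of $K/G$. Concretely, I would set $p(\{v_i\}_{i \in I}) := \{G \cdot v_i\}_{i \in I}$ for every simplex $\{v_i\}_{i \in I} \in K$. The first point to verify is that the image is genuinely a simplex of $K/G$, and this is immediate from the definition of the orbit space: taking the representatives $u_i = v_i$ exhibits $\{v_i\}_{i \in I}$ itself as a simplex of $K$ spanned by a system of representatives of the orbits $G \cdot v_i$, which is exactly the condition for $\{G \cdot v_i\}_{i \in I}$ to lie in $K/G$.

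Next I would confirm that $p$ is a simplicial map. Since a simplicial map is determined by a vertex map that carries simplices to simplices, the previous step already supplies this. The one caveat to record is that the vertex map need not be injective: if two vertices $v_i$, $v_j$ of a simplex of $K$ lie in the same $G$-orbit, then $G \cdot v_i = G \cdot v_j$ and the image simplex has strictly fewer vertices. This collapsing is permitted for simplicial maps, so it causes no difficulty; one simply notes that the restriction of $p$ to $\Vertices(K)$ recovers the stated canonical map. For surjectivity I would argue in two stages. Every vertex $G \cdot v$ of $K/G$ is the image $p(v)$ of the representative vertex $v$, so $p$ is onto $\Vertices(K/G)$. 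For a higher simplex $\{G \cdot v_i\}_{i \in I}$ of $K/G$, the defining condition supplies representatives $u_i \in G \cdot v_i$ with $\tau := \{u_i\}_{i \in I}$ spanning a simplex of $K$; the $u_i$ are necessarily distinct, since $u_i = u_j$ would force $G \cdot v_i = G \cdot v_j$ and collapse two distinct vertices of the given simplex. Then $p(\tau) = \{G \cdot u_i\}_{i \in I} = \{G \cdot v_i\}_{i \in I}$, using $G \cdot u_i = G \cdot v_i$, which produces the required preimage.

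I do not anticipate a genuine obstacle here, as the result is essentially a direct consequence of how simplices of $K/G$ are defined in terms of representing simplices of $K$. The only step requiring real care is the surjectivity onto higher simplices, where one must invoke the definition of a simplex of $K/G$ to extract a representing simplex $\tau$ in $K$ and then check that its vertices are distinct, so that $p(\tau)$ carries the correct index set and matches the chosen simplex of $K/G$ exactly.
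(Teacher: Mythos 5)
Your route is the same as the paper's: define $p$ index-wise on simplices by $p\big(\{v_i\}_{i \in I}\big) = \{G \cdot v_i\}_{i \in I}$, check that the image lies in $K/G$ using the defining condition of the orbit space, and obtain surjectivity by pulling back a representing simplex. However, the one step where the paper exercises real care is not quite right as you have written it. When two vertices $v_i \ne v_j$ of a simplex $\sigma = \{v_i\}_{i \in I}$ lie in the same orbit, the family $(v_i)_{i \in I}$ is \emph{not} a system of representatives of the orbits $G \cdot v_i$: it assigns two distinct representatives to a single orbit, whereas Definition~\ref{def:Orbit} requires, for the set of \emph{distinct} orbits appearing in $p(\sigma)$, a simplex of $K$ spanned by one representative per orbit. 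So ``taking the representatives $u_i = v_i$'' does not literally exhibit the condition for $p(\sigma)$ to lie in $K/G$, and your subsequent caveat addresses only whether collapsing is permitted for simplicial maps in the abstract, not whether the collapsed image set actually satisfies the membership condition for $K/G$ --- which is where the difficulty genuinely sits. The repair is immediate, and it is exactly the paper's move: choose $J \subseteq I$ of smallest cardinality with $\{G \cdot v_j\}_{j \in J} = \{G \cdot v_i\}_{i \in I}$ (one index per distinct orbit); then $\{v_j\}_{j \in J}$ is a face of $\sigma$, hence a simplex of $K$ since $K$ is a simplicial complex, and it \emph{is} a genuine system of representatives, so $p(\sigma)$ is a simplex of $K/G$.

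On the other side of the ledger, your surjectivity argument is more complete than the paper's, which asserts only that surjectivity of $p$ follows from surjectivity of the vertex map $\psi$ --- a statement that is not sufficient for general simplicial maps. You correctly invoke the definition of $K/G$ to extract a representing simplex $\tau = \{u_i\}_{i \in I}$ in $K$, observe that the $u_i$ must be distinct because the orbits $G \cdot v_i$ are, and verify $p(\tau) = \{G \cdot v_i\}_{i \in I}$. With the collapsing case patched as above, your proof is correct and matches the paper's in substance.
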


\begin{proof}
Let $\psi \colon \Vertices (K) \to \Vertices (K/G)$ be the canonical vertex map such that $v \mapsto G \cdot v$. Define the map $p \colon K \to K/G$ by $p \big(\{ v_i \}_{i \in I} \big) := \big\{ \psi (v_i) \big\}_{i \in I}$ for a simplex $\{ v_i \}_{i \in I} \in K$, noting that the $\psi (v_i)$ may be equal for distinct indices in $I$. Then $p$ is simplicial since, letting $J \subseteq I$ be a subset of smallest cardinality such that $\big\{ \psi (v_j) \big\}_{j \in J} = \big\{ \psi (v_i) \big\}_{i \in I}$, we have that $\{ v_j \}_{j \in J}$ is a simplex in $K$ and hence, by the definition of $K/G$, $\big\{ \psi (v_j) \big\}_{j \in J}$ is a simplex in $K/G$. Surjectivity of $p$ follows from the surjectivity of $\psi$, and $p$ extends $\psi$ since $p |_{\Vertices (K)} = \psi$.
\end{proof}

\begin{notation}
We refer to the canonical map $p \colon K \to K/G$ in Proposition~\ref{prop:projection} as the \emph{projection map}.
\end{notation}

When we identify conceptually-equivalent vertices in a simplicial representation $K$ to give an orbit space, we would expect that the orbit space is isomorphic to a simplicial subcomplex of $K$ as a consequence of the underlying symmetry identified with the group action. We now confirm that this is indeed the case, beginning with two lemmas.

\begin{lemma}\label{lemma:transposition}
Let $G$ be a group acting on a simplicial complex $K$ by exchangeable vertices, and let $u \in \Vertices (K)$. If $v$, $w \in G \cdot u$ then the transposition $(v\;w)$ is in $G$. In particular, $v$ and $w$ are exchangeable in $K$.
\end{lemma}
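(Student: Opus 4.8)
The plan is to exploit the fact that $G$ is generated by transpositions together with the observation that the action of $G$ on $\Vertices(K)$ is nothing more than the permutation action. Since an automorphism of a simplicial complex is determined by its restriction to the vertices, and each generator $(a\;b) \in M$ acts as the exchange automorphism whose vertex-restriction is the transposition $(a\;b)$, the defining homomorphism $\Phi \colon G \to \Aut(K)$ satisfies $\Phi(g)|_{\Vertices(K)} = g$ for every $g \in G$; in particular $g \cdot x = g(x)$ for each vertex $x$. Consequently the orbit $G \cdot u$ is exactly the orbit of $u$ under the permutation group $G \le \Sym\big(\Vertices(K)\big)$, and it suffices to produce $(v\;w)$ as an element of this permutation group and then read off exchangeability from simpliciality. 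We may assume $v \ne w$, the case $v = w$ being trivial.

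Next I would connect $v$ and $w$ through generators. To do so, introduce the graph $\Gamma$ on $\Vertices(K)$ whose edges are the pairs $\{a,b\}$ with $(a\;b) \in M$. Writing $v = g \cdot u$ and $w = h \cdot u$ and expanding $hg^{-1}$ as a product of generators from $M$, I would apply these transpositions to $v$ one at a time: each transposition either fixes the current vertex or slides it along an edge of $\Gamma$. This exhibits a walk $v = x_0, x_1, \ldots, x_k = w$ in $\Gamma$ with $(x_{i-1}\;x_i) \in M$ for every $i$, so $v$ and $w$ lie in the same connected component of $\Gamma$.

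The heart of the argument is then to assemble $(v\;w)$ from the edge-transpositions along this walk, which I would carry out by induction on $k$ using the conjugation identity $(y\;z)(x\;y)(y\;z) = (x\;z)$. The base case $k \le 1$ is immediate from $(v\;w) \in M \subseteq G$, and for the inductive step one has $(x_0\;x_k) = (x_{k-1}\;x_k)\,(x_0\;x_{k-1})\,(x_{k-1}\;x_k) \in G$, since $(x_0\;x_{k-1}) \in G$ by the inductive hypothesis and $(x_{k-1}\;x_k) \in M \subseteq G$. Hence $(v\;w) \in G$. (Equivalently, one could note that a set of transpositions generates the full symmetric group on each connected component of $\Gamma$, so that $G$ is the direct product of the symmetric groups on its orbits, and $(v\;w)$ lies in the factor corresponding to $G \cdot u$.)

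Finally, to obtain the ``in particular'' clause, I would invoke simpliciality: since $(v\;w) \in G$ and $G$ acts simplicially on $K$, the automorphism $\Phi\big((v\;w)\big)$ is a simplicial permutation of $\Vertices(K)$ that, by the reduction in the first paragraph, exchanges $v$ and $w$ and fixes every other vertex. This is precisely an exchange automorphism for $v$ and $w$ in the sense of Definition~\ref{def:Exchange}, so $v$ and $w$ are exchangeable in $K$. The main obstacle is the middle step, namely extracting the walk in $\Gamma$ from mere membership of $v$ and $w$ in the common orbit $G \cdot u$; once the walk is in hand, the conjugation identity makes the remaining assembly routine.
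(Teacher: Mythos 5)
Your proposal is correct and takes essentially the same approach as the paper: both proofs connect $v$ to $w$ by a chain of generating transpositions $t_1 = (v\;x_1)$, $t_2 = (x_1\;x_2)$, \ldots, $t_n = (x_{n-1}\;w)$ and then conjugate to produce $(v\;w)$ --- the paper's palindrome $(v\;w) = t_1 t_2 \cdots t_{n-1} t_n t_{n-1} \cdots t_2 t_1$ is precisely your induction unrolled. The one detail to tidy is that your walk in $\Gamma$ may revisit vertices, in which case the identity $(x_0\;x_k) = (x_{k-1}\;x_k)(x_0\;x_{k-1})(x_{k-1}\;x_k)$ degenerates (e.g.\ when $x_0 = x_{k-1}$); passing first to a simple path fixes this, and plays the same role as the paper's ``no proper subsequence'' reduction.
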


\begin{proof}
Since $v$, $w \in G \cdot u$ there exists $g \in G$ such that $g \cdot v = w$, where $g = t_n t_{n-1} \cdots t_2 t_1$ is a product of transpositions $t_i \in G$ for $i \in [n]$. Without loss of generality we may assume that there is no proper subsequence of the sequence of transpositions $(t_i)_{i=1}^{n}$ whose product sends $v$ to $w$. In particular: $v$ is an element of $t_1$ only; $w$ is an element of $t_n$ only; and the transpositions are mutually disjoint unless they are successive in the sequence, whereby their intersection is a singleton. We then have $t_1 = (v\;x_1)$, $t_{i+1} = (x_i\;x_{i+1})$ for $i \in [n-2]$, and $t_n = (x_{n-1}\;w)$, for some $\{x_i\}_{i=1}^{n-1} \subseteq \Vertices (K)$. It follows that $(v\;w) = t_1 t_2 \cdots t_{n-1} t_n t_{n-1} \cdots t_2 t_1 \in G$.
\end{proof}

\begin{lemma}\label{lemma:IsoSim}
Let $G$ be a group acting on a simplicial complex $K$ by exchangeable vertices. Suppose that $\{ G \cdot v_i \}_{i \in I}$ spans a simplex in $K/G$, for some index set $I$, and $\{ u_i \}_{i \in I}$ spans a simplex in $K$ with $(u_i)_{i \in I} \in \prod_{i \in I} G \cdot v_i$. Then for each $(w_i)_{i \in I} \in \prod_{i \in I} G \cdot v_i$ we have that $\{ w_i \}_{i \in I}$ spans a simplex in $K$, and there is a $g \in G$ such that, for each $i \in I$, $g \cdot u_i = w_i$, $g \cdot w_i = u_i$, and $g \cdot v = v$ for all $v \in \Vertices (K) \setminus \big( \{ u_i \}_{i \in I} \cup \{ w_i \}_{i \in I} \big)$.
\end{lemma}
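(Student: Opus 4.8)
The plan is to exhibit a single group element $g \in G$ that simultaneously realises all the required transpositions, and then to read off both conclusions from it. First I would record two preliminary facts. Since $\{G \cdot v_i\}_{i \in I}$ is a simplex of $K/G$, the index set $I$ is finite and the orbits $G \cdot v_i$ are pairwise distinct; being distinct orbits of a group action, they are automatically pairwise disjoint. Consequently, for the given representatives $(u_i)_{i \in I}$ and any chosen $(w_i)_{i \in I}$ in $\prod_{i \in I} G \cdot v_i$, the vertices $u_i$ and $w_i$ both lie in $G \cdot v_i$, and whenever $i \ne j$ the vertices associated to $i$ are disjoint from those associated to $j$.

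Next I would build $g$. For each $i \in I$ the vertices $u_i$ and $w_i$ lie in the common orbit $G \cdot v_i$, so Lemma~\ref{lemma:transposition} gives $(u_i\;w_i) \in G$ (this factor is the identity permutation when $u_i = w_i$). I then set $g := \prod_{i \in I}(u_i\;w_i)$, a finite product of elements of $G$, so that $g \in G$ by closure. By the disjointness noted above, the transpositions $(u_i\;w_i)$ are pairwise disjoint and hence commute, so $g$ is genuinely a product of disjoint transpositions: it satisfies $g \cdot u_i = w_i$ and $g \cdot w_i = u_i$ for every $i \in I$, while it fixes every vertex that is neither some $u_i$ nor some $w_i$, that is, $g \cdot v = v$ for all $v \in \Vertices(K) \setminus \big( \{u_i\}_{i \in I} \cup \{w_i\}_{i \in I} \big)$. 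This $g$ is exactly the asserted element.

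Finally I would deduce the first claim from $g$. Since $g \in G$ acts simplicially on $K$ and $\{u_i\}_{i \in I}$ spans a simplex of $K$, its image $g \cdot \{u_i\}_{i \in I} = \{g \cdot u_i\}_{i \in I} = \{w_i\}_{i \in I}$ is again a simplex of $K$, which proves that every choice of representatives spans a simplex. The main obstacle is the bookkeeping that makes $g$ behave as a product of disjoint transpositions: everything hinges on the distinct orbits $G \cdot v_i$ being pairwise disjoint, which is precisely what forces the individual transpositions produced by Lemma~\ref{lemma:transposition} to commute and to leave untouched every vertex outside the chosen pairs $\{u_i, w_i\}$. Once that is secured, membership $g \in G$ is immediate from closure under products, and the simplicial statement is a one-line consequence of $g$ being an automorphism of $K$.
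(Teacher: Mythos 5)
Your proof is correct and follows essentially the same route as the paper's: invoke Lemma~\ref{lemma:transposition} to get each transposition $(u_i\;w_i) \in G$, form their product $g$, use disjointness of the distinct orbits to see these transpositions commute and fix everything outside the pairs $\{u_i,w_i\}$, and conclude that $g$ maps the simplex $\{u_i\}_{i\in I}$ to $\{w_i\}_{i\in I}$ since it acts simplicially. Your explicit remarks that distinct orbits are disjoint and that the factor is the identity when $u_i = w_i$ are minor clarifications of points the paper leaves implicit.
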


\begin{proof}
Suppose $(w_i)_{i \in I} \in \prod_{i \in I} G \cdot v_i$. It follows from Lemma~\ref{lemma:transposition} that for each $i \in I$ the transposition $t_i := (u_i\; w_i)$ is in $G$. So define $g \in G$ as the product $g := \prod_{i \in I} t_i$, where the order in which the $t_i$ are multiplied is unimportant since these group elements mutually commute: indeed, two vertices from distinct $G$-orbits are not exchangeable. Then $g \in G$ is the required group element, and it follows that the action of $g$ maps the simplex spanned by $\{ u_i \}_{i \in I}$ to the simplex spanned by $\{ w_i \}_{i \in I}$.
\end{proof}

\begin{definition}[\textbf{Fundamental domain}]
Let $K$ be a simplicial complex with an action of the group $G$ on $K$. A \emph{fundamental domain} for the action of $G$ on $K$ is a full subcomplex $L$ of $K$ that intersects each vertex $G$-orbit exactly once.
\end{definition}

Note that a fundamental domain is not necessarily unique, and the group action transforms between the different fundamental domains of the complex. We now show that a fundamental domain exists for a group action on a simplicial complex by exchangeable vertices, and that the corresponding orbit space is isomorphic to the fundamental domain. From this result one infers that all fundamental domains are isomorphic for a given group action on a simplicial complex by exchangeable vertices.

\begin{theorem}
Let $G$ be a group acting on a simplicial complex $K$ by exchangeable vertices. Then a fundamental domain exists for the action of $G$ on $K$, and the orbit space $K/G$ is isomorphic to the fundamental domain.
\end{theorem}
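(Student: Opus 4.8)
The plan is to construct an explicit fundamental domain by selecting orbit representatives, and then to show that the projection map of Proposition~\ref{prop:projection} restricts to a simplicial isomorphism onto it. First I would establish existence: choose a set $S \subseteq \Vertices(K)$ containing exactly one vertex from each vertex $G$-orbit, and let $L$ be the full subcomplex of $K$ spanned by $S$. By construction $L$ is full, and $\Vertices(L) = S$ meets each vertex orbit exactly once, so $L$ is a fundamental domain, giving the existence half of the statement immediately.

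For the isomorphism I would take $\phi := p|_L \colon L \to K/G$, the restriction to $L$ of the projection map $p$. On vertices $\phi$ sends $v \mapsto G \cdot v$; since $S$ contains exactly one representative of each orbit, this is a bijection from $\Vertices(L)$ onto $\Vertices(K/G)$. It then remains to verify that a subset $\{w_i\}_{i\in I} \subseteq S$ spans a simplex in $L$ if and only if $\{G \cdot w_i\}_{i \in I}$ spans a simplex in $K/G$, as this is precisely the condition making the vertex bijection $\phi$ a simplicial isomorphism.

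The forward implication is routine: if $\{w_i\}_{i\in I}$ is a simplex of $L$ then it is a simplex of $K$, so $\{G \cdot w_i\}_{i\in I}$ is a simplex of $K/G$ by the definition of the orbit space, taking the $w_i$ themselves as the system of representatives. The converse is the heart of the argument, and it is exactly here that Lemma~\ref{lemma:IsoSim} does the work. Suppose $\{G \cdot w_i\}_{i\in I}$ spans a simplex in $K/G$. By the definition of $K/G$ there exist representatives $u_i \in G \cdot w_i$ with $\{u_i\}_{i \in I}$ a simplex of $K$. Applying Lemma~\ref{lemma:IsoSim} with the alternative system $(w_i)_{i \in I} \in \prod_{i \in I} G \cdot w_i$, I would conclude that $\{w_i\}_{i \in I}$ also spans a simplex in $K$; since every $w_i \in S$ and $L$ is full, $\{w_i\}_{i \in I}$ then spans a simplex in $L$. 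This establishes the backward implication, so $\phi$ is an isomorphism $L \cong K/G$.

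The main obstacle, which is handled entirely by Lemma~\ref{lemma:IsoSim}, is the independence of the spanning condition from the choice of orbit representatives. Without that lemma the "only if" direction could fail: knowing that \emph{some} system of representatives spans a simplex in $K$ would not by itself guarantee that the specific representatives lying in $L$ do so. The group action being by exchangeable vertices is what supplies, via Lemmas~\ref{lemma:transposition} and~\ref{lemma:IsoSim}, the transpositions in $G$ that transport one system of representatives to another while preserving the spanning property, and it is this that makes the restricted projection a genuine isomorphism rather than merely a simplicial surjection.
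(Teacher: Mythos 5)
Your proof is correct and follows essentially the same route as the paper's: the identical construction of the fundamental domain as the full subcomplex on a set of orbit representatives, with Lemma~\ref{lemma:IsoSim} supplying the key fact that the spanning condition is independent of the choice of representatives. The only cosmetic difference is that you define the isomorphism as the restriction $p|_L \colon L \to K/G$ of the projection map, whereas the paper defines the inverse map $\lambda \colon K/G \to L$ directly; these are the same bijection read in opposite directions.
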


\begin{proof}
We first show that a fundamental domain exists for $G$ on $K$. Let $\Vertices (K/G) := \{\, G \cdot v_i \,\}_{i \in I}$ be the set of distinct vertex orbits where $I$ is the index set, choose $( w_i )_{i \in I} \in \prod_{i \in I} G \cdot v_i$, and let $L$ be the full subcomplex of $K$ with vertex set $\{ w_i \}_{i \in I}$. Then $L$ is a fundamental domain.

It remains to show that $K/G$ is isomorphic to $L$. Let $\lambda \colon K/G \to L$ be the map such that if $\sigma$ is a simplex in $K/G$ spanned by the vertices $\{\, G \cdot v_j \,\}_{j \in J}$, where $J \subseteq I$, then $\lambda (\sigma)$ is the simplex in $L$ spanned by the vertices $\{ w_j \}_{j \in J}$. The map $\lambda$ is well defined, since if the simplex $\sigma \in K/G$ is spanned by $\{\, G \cdot v_j \,\}_{j \in J}$ then there exists a simplex in $K$ spanned by a set of vertices $\{ u_j \}_{j \in J}$ with $( u_j )_{j \in J} \in \prod_{j \in J} G \cdot v_j$, so by Lemma~\ref{lemma:IsoSim} the vertices $\{ w_j \}_{j \in J}$ span a simplex in $K$, which is also in the full subcomplex $L$. The map $\lambda$ is then a simplicial bijection, and hence a simplicial isomorphism.
\end{proof}

Since the orbit space may have exchangeable vertices that were either not accounted for or not present in the original complex, we can obtain the orbit space of the orbit space, and so on for a finite number of steps until an orbit space is obtained which has no further exchangeable vertices. To see this more clearly, we now show how a group action on a simplicial complex by exchangeable vertices relates to the full group action on the corresponding orbit space by exchangeable vertices. Note that we say that a group action on a simplicial complex by exchangeable vertices is a \emph{full group action} when the group action gives all possible exchange automorphisms of the complex. We first require a lemma.

\begin{lemma}\label{lemma:Ex}
Let $G$ be a group action on a simplicial complex $K$ by exchangeable vertices, and let $t$ be a transposition, not necessarily in $G$, from the full group action on $K$ by exchangeable vertices that acts to exchange the vertices $u$, $w \in \Vertices (K)$. Further, let $H$ be the full group action on $K/G$ by exchangeable vertices, and let $\{ G \cdot v_i \}_{i \in I}$ be the set of distinct orbits in $K/G$ which partition $\Vertices (K)$. Then there exists a transposition $\overline{t} \in H$ that acts to exchange the vertices $G \cdot v_j$, $G \cdot v_k \in \Vertices (K/G)$ where $j$, $k \in I$, $u \in G \cdot v_j$, and $w \in G \cdot v_k$. Further, $\overline{t}$ is the identity transposition if and only if $t \in G$.
\end{lemma}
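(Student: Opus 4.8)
The plan is to reduce the biconditional to a statement about orbits, and then to build the transposition $\overline t$ directly as an exchange automorphism of $K/G$. Since $u \in G \cdot v_j$ and $w \in G \cdot v_k$ with the orbits partitioning $\Vertices(K)$, the indices $j$ and $k$ are uniquely determined by $u$ and $w$. I would first settle the final claim of the lemma by showing that $t \in G$ if and only if $G \cdot v_j = G \cdot v_k$. For the reverse implication, if $j = k$ then $u$ and $w$ lie in a common orbit, so Lemma~\ref{lemma:transposition} gives $(u\;w) \in G$; since $t$ is precisely the exchange transposition $(u\;w)$ on $\Vertices(K)$ and the exchange automorphism for a pair of vertices is unique (the remark following Definition~\ref{def:Exchange}), this forces $t \in G$. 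For the forward implication, if $t \in G$ then $w = t \cdot u \in G \cdot u = G \cdot v_j$, so $G \cdot v_k = G \cdot v_j$ and the two orbit-vertices coincide.

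With this in hand, the case $t \in G$ is immediate: here $G \cdot v_j = G \cdot v_k$, so the transposition that ``exchanges'' these two vertices of $K/G$ is the identity, and I take $\overline t = e \in H$. This supplies one direction of the asserted equivalence, and the argument above gives the other, so that $\overline t$ is the identity transposition exactly when $t \in G$.

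The substance is the case $t \notin G$, where $j \neq k$ and I must produce a genuine exchange automorphism of $K/G$ for $G \cdot v_j$ and $G \cdot v_k$. I would define $\overline\phi \colon K/G \to K/G$ on vertices by exchanging $G \cdot v_j$ with $G \cdot v_k$ and fixing every other orbit, and verify that $\overline\phi$ is simplicial. Let $\{G \cdot v_i\}_{i \in S}$ span a simplex of $K/G$. If $S$ contains neither or both of $j,k$ then $\overline\phi$ leaves the vertex set of the simplex unchanged, and there is nothing to check. The one delicate subcase is $j \in S$, $k \notin S$ (the symmetric subcase is identical). Here I invoke Lemma~\ref{lemma:IsoSim} to choose representatives of the orbits $\{G \cdot v_i\}_{i \in S}$ in which $G \cdot v_j$ is represented by $u$ itself, obtaining a simplex $\{u\} \cup \{r_i\}_{i \in S \setminus \{j\}}$ of $K$ with $r_i \in G \cdot v_i$. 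Because $i \ne j,k$ for these indices, each $r_i$ lies in an orbit different from those of $u$ and $w$, so $t$ fixes every $r_i$ while sending $u$ to $w$; as $t$ is an automorphism of $K$, the image $\{w\} \cup \{r_i\}_{i \in S \setminus \{j\}}$ is again a simplex of $K$. Its vertices represent the orbits $\{G \cdot v_k\} \cup \{G \cdot v_i\}_{i \in S \setminus \{j\}}$, which is precisely $\overline\phi\big(\{G \cdot v_i\}_{i \in S}\big)$, so by the definition of $K/G$ this set spans a simplex of $K/G$. Thus $\overline\phi$ is simplicial; being an involution on vertices it is a simplicial automorphism, and by construction it is the exchange automorphism for $G \cdot v_j$ and $G \cdot v_k$. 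Hence these vertices are exchangeable in $K/G$, the transposition $\overline t := (G \cdot v_j\; G \cdot v_k)$ belongs to the full group action $H$, and it is non-identity since $j \neq k$.

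I expect the third subcase of the simpliciality check to be the main obstacle: one must transport a simplex of $K/G$ back to $K$ with a prescribed representative (this is exactly what Lemma~\ref{lemma:IsoSim} supplies), move it by the automorphism $t$, and then recognise the resulting orbit set as the image under $\overline\phi$. The remaining bookkeeping --- distinctness of the representatives and the fact that $t$ fixes them because they lie in other orbits --- is routine once the representative $u$ is chosen correctly.
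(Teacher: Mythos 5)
Your proof is correct and takes essentially the same route as the paper's: both construct the induced map on $K/G$ that swaps the two orbit-vertices and fixes the rest, verify simpliciality by lifting a simplex of $K/G$ to $K$ with the representative $u$ prescribed via Lemma~\ref{lemma:IsoSim}, apply the automorphism $t$, and project back down, with the final biconditional resting on Lemma~\ref{lemma:transposition}. The only difference is organisational --- you split into the cases $t \in G$ and $t \notin G$ and settle the biconditional first, whereas the paper runs the construction uniformly (the swap being the identity permutation when $j=k$) and derives the biconditional at the end.
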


\begin{proof}
Suppose that $t \in \Sym \big(\Vertices (K)\big)$ transposes the vertices $u$, $w \in \Vertices (K)$, and that $(u,w) \in G \cdot v_j \times G \cdot v_k$ for some $j$, $k \in I$. The permutation $\pi_t$ on the vertices $\{ G \cdot v_i \}_{i \in I}$ of $K/G$ which transposes $G \cdot v_j$ and $G \cdot v_k$ and fixes all other vertices induces a map $\phi_t \colon K/G \to K/G$ which sends a simplex $\tau \in K/G$ spanned by $\{ G \cdot v_i \}_{i \in J}$ for some $J \subseteq I$ to the simplex $\phi_t (\tau)$ spanned by $\big\{ \pi_t (G \cdot v_i) \big\}_{i \in J}$ where $G \cdot v_j$ and $G \cdot v_k$ are exchanged, if required.

To show that $\phi_t$ is well defined we need to establish that $\phi_t (\tau)$ is a simplex in $K/G$. Since $\{ G \cdot v_i \}_{i \in J}$ spans the simplex $\tau \in K/G$, there exists $(u_i)_{i \in J} \in \prod_{i \in J} G \cdot v_i$ such that $\{ u_i \}_{i \in J}$ spans a simplex $\sigma \in K$ by the definition of $K/G$. By Lemma~\ref{lemma:IsoSim} we may assume without loss of generality that $u_j = u$ if $j \in J$ and $u_k = w$ if $k \in J$. Then the image of $\sigma$ under the action of $t$ is the simplex $t \cdot \sigma \in K$ spanned by $\{ t \cdot u_i \}_{i \in J}$ where $u$ and $w$ are exchanged, if required. Now, if $i \ne j$ or $k$ then $t \cdot u_i = u_i \in G \cdot v_i = \pi_t (G \cdot v_i)$, if $i = j$ then $t \cdot u_i = t \cdot u_j = u_k \in \pi_t (G \cdot v_j)$, and if $i = k$ then $t \cdot u_i = t \cdot u_k = u_j \in \pi_t (G \cdot v_k)$. It follows that the simplex $t \cdot \sigma \in K$ satisfies $(t \cdot u_i)_{i \in J} \in \prod_{i \in J} \pi_t (G \cdot v_i)$, hence $\phi_t (\tau)$ is a simplex in $K/G$. So $\phi_t$ is a well-defined simplicial map.

Since $\phi_t$ is self-inverse it is bijective and therefore an automorphism. It follows that $\phi_t$ is an exchange automorphism that exchanges the vertices $G \cdot v_j$ and $G \cdot v_k$, so the transposition $\overline{t} := (G \cdot v_j \; G \cdot v_k)$ is in $H$.

Finally, $t \in G$ if and only if $u$ and $w$ are in the same $G$-orbit if and only if $G \cdot v_j = G \cdot v_k$ if and only if $\overline{t}$ is the identity transposition, where the first equivalence uses Lemma~\ref{lemma:transposition}.
\end{proof}

\begin{proposition}\label{prop:AutOrb}
Let $G$ and $F$ be two group actions on a simplicial complex $K$ by exchangeable vertices, and let $H$ be the full group action on $K/G$ by exchangeable vertices. Then there is a group homomorphism $\alpha \colon F \to H$ such that $\alpha (t_n \cdots t_2 t_1)  = \overline{t}_n \cdots \overline{t}_2 \overline{t}_1$ for each element $t_n \cdots t_2 t_1 \in F$, where the $t_i$ for $i \in [n]$ are generating transpositions in $F$, and $\ker (\alpha) = F \cap G$.
\end{proposition}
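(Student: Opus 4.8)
The homomorphism property is essentially formal once the map is known to be well defined, so the plan is to first fix a candidate $\alpha$ on the generating transpositions, then show that it respects the relations among them, and finally read off the kernel from Lemma~\ref{lemma:Ex}. For the candidate I would use Lemma~\ref{lemma:Ex} verbatim: to each generating transposition $t = (u\;w)$ of $F$, which is by hypothesis an exchange transposition of $K$, it assigns the transposition $\overline{t} = (G\cdot u \;\; G\cdot w) \in H$, with $\overline{t}$ equal to the identity precisely when $t \in G$. Setting $\alpha(t_n \cdots t_1) := \overline{t}_n \cdots \overline{t}_1$ then forces $\alpha(xy) = \alpha(x)\alpha(y)$, since concatenating generator words for $x$ and $y$ yields a generator word for $xy$ whose image is the product of the images. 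Thus the entire content reduces to well-definedness: if $t_n \cdots t_1 = s_m \cdots s_1$ as elements of $F$, one must show $\overline{t}_n \cdots \overline{t}_1 = \overline{s}_m \cdots \overline{s}_1$ in $H$.

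I expect this well-definedness to be the main obstacle, and I would attack it through the relation structure of $F$. Since $F$ is generated by the transpositions $M_F$, I would present $F$ on this generating set and verify that each defining relation is carried to a valid identity of $H$ under $t \mapsto \overline{t}$; equivalently, that every word $t_n \cdots t_1$ representing the identity of $F$ maps to the identity of $H$. The key structural input is Lemma~\ref{lemma:transposition}, which forces $G$ to restrict to the full symmetric group on each of its vertex orbits, so that the orbits $\{G\cdot v_i\}$ partition $\Vertices(K)$ and $\overline{t}$ depends on $t$ only through the two orbits it meets. I would then track the accumulated permutation of $\Vertices(K/G) = \{G\cdot v_i\}$ along a relator word, using Lemma~\ref{lemma:IsoSim} to move freely between representatives within a single orbit, and aim to show that this permutation in $\Sym\big(\Vertices(K/G)\big)$ is trivial exactly when the underlying permutation of $\Vertices(K)$ lies in $G$. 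This is the delicate point: one must check that the passage to orbits does not manufacture spurious non-commutativity among the $\overline{t}$ when two transpositions of $F$ touch a common $G$-orbit, and it is precisely here that the interaction between the two independent actions $F$ and $G$ must be controlled.

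Granting well-definedness, the kernel falls out of the final clause of Lemma~\ref{lemma:Ex}. For $f = t_n \cdots t_1 \in F$ we have $\alpha(f) = e$ in $H$ if and only if the induced permutation $\overline{t}_n \cdots \overline{t}_1$ of the orbits is trivial, which by the orbit bookkeeping above holds if and only if $f$ preserves every $G$-orbit setwise, that is, if and only if $f \in G$; combined with $f \in F$ this is exactly $f \in F \cap G$. Hence $\ker(\alpha) = F \cap G$, and so the entire proposition hinges on making the well-definedness argument of the second paragraph rigorous.
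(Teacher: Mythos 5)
Your organisation of the argument --- define $\alpha$ on generator words via Lemma~\ref{lemma:Ex}, observe that multiplicativity is then formal, and isolate well-definedness as the crux --- is exactly the skeleton of the paper's proof, and your instinct that the "interaction between the two independent actions $F$ and $G$ must be controlled" is precisely where the difficulty lies. But your proposal never discharges that step, and in fact it cannot be discharged: the equivalence you "aim to show" (that the word-image $\overline{t}_n \cdots \overline{t}_1$ is trivial in $\Sym\big(\Vertices(K/G)\big)$ exactly when $t_n \cdots t_1$ preserves every $G$-orbit, i.e.\ lies in $G$) is false, and with it the proposition as stated. Take $K$ to be the full 2-simplex on the vertices $\{1\},\{2\},\{3\}$, so that every transposition is an exchange automorphism. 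Let $G = \big\langle (1\;2) \big\rangle$, with orbits $A := G \cdot \{1\} = \big\{\{1\},\{2\}\big\}$ and $B := G \cdot \{3\} = \big\{\{3\}\big\}$, so $K/G$ is a 1-simplex and $H = \big\langle (A\;B) \big\rangle \cong \ZZ/2\ZZ$. Let $F = \big\langle (1\;3), (2\;3) \big\rangle$, the full symmetric group on the three vertices. By Lemma~\ref{lemma:Ex}, $\overline{(1\;3)} = \overline{(2\;3)} = (A\;B)$, so on words in these generators $\alpha$ is forced to be $f \mapsto (A\;B)^{\ell}$ with $\ell$ the word length; since both generators are odd permutations this map is well defined (it is the sign character), but its kernel is $\big\{e, (1\;2\;3), (1\;3\;2)\big\}$, whereas $F \cap G = \big\{e, (1\;2)\big\}$. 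Both inclusions of the claimed kernel identity fail: $(1\;2) = (1\;3)(2\;3)(1\;3) \in F \cap G$ has $\alpha\big((1\;2)\big) = (A\;B)^3 \neq e$, while $(1\;3\;2) = (1\;3)(2\;3) \in \ker(\alpha)$ sends the orbit $A$ to $\big\{\{3\},\{1\}\big\}$, so does not preserve it setwise. (If one instead insists that $(1\;2)$ be among the generating transpositions of $F$, then $\alpha$ is not even well defined: the one-letter word $(1\;2)$ maps to the identity by the last clause of Lemma~\ref{lemma:Ex}, while the word $(1\;3)(2\;3)(1\;3)$ for the same element maps to $(A\;B)$.) A structural obstruction detects this without computation: kernels of homomorphisms are normal, and $F \cap G = \big\{e,(1\;2)\big\}$ is not normal in $F$, so no homomorphism $F \to H$ of any kind can have kernel $F \cap G$.

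For comparison, the paper's own proof does not fare better at this point: it asserts well-definedness in a single sentence, and in the kernel computation it silently identifies the word-image $\overline{f}$ with the permutation of orbits induced by $f$ --- exactly the identification that breaks down, because a transposition of $F$ joining a non-singleton $G$-orbit to a second orbit does not even map the orbit partition to itself. So you have located a genuine gap, but it is a gap in the proposition rather than merely in your write-up, and no amount of bookkeeping with presentations and relators will close it. The statement does become true, by essentially your argument, under an extra hypothesis: if every generating transposition of $F$ either lies in $G$ or exchanges two vertices whose $G$-orbits are both singletons, then every element of $F$ permutes the $G$-orbit partition, $\alpha$ is the restriction to $F$ of the canonical homomorphism from the partition stabiliser to $\Sym\big(\Vertices(K/G)\big)$, and its kernel is $F$ intersected with the setwise stabiliser of each block, which equals $F \cap G$ by Lemma~\ref{lemma:transposition}. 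Some such hypothesis needs to be added before either your proof or the paper's can be completed.
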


\begin{proof}
We first show that $\alpha \colon F \to H$ is a well-defined map. Let $f \in F$, and suppose that $f = t_m \cdots t_2 t_1$ where the $t_i$, for $i \in [m]$, are generating transpositions in $F$. By Lemma~\ref{lemma:Ex}, each $\overline{t}_i$ for $i \in [m]$ is a transposition in $H$, so $\overline{f} := \overline{t}_m \cdots \overline{t}_2 \overline{t}_1 \in H$. Now, if we also have that $f = s_n \cdots s_2 s_1$ where the $s_i$, for $i \in [n]$, are generating transpositions in $F$, then the corresponding permutations of $\Vertices (K/G)$, namely $\overline{t}_m \cdots \overline{t}_2 \overline{t}_1$ and $\overline{s}_n \cdots \overline{s}_2 \overline{s}_1$, are equal. It follows that $\alpha$ is well defined.

To show that $\alpha$ is a homomorphism, let $t_m \cdots t_2 t_1$ and $s_n \cdots s_2 s_1$ be two permutations in $F$, in terms of the generating transpositions. Then $\alpha \big((t_m \cdots t_2 t_1) (s_n \cdots s_2 s_1)\big) = (\overline{t}_m \cdots \overline{t}_2 \overline{t}_1) (\overline{s}_n \cdots \overline{s}_2 \overline{s}_1) = \alpha(t_m \cdots t_2 t_1) \alpha(s_n \cdots s_2 s_1)$, and the result follows.

It remains to show that the kernel of $\alpha$ is equal to $F \cap G$. For this, let $\{ G \cdot v_i \}_{i \in I}$ be the set of distinct vertices in $K/G$, which partition $\Vertices (K)$. Suppose that $f \in \ker (\alpha)$, so that $\alpha (f) = \overline{f}$ is the identity in $H$. Then $\overline{f}$ fixes each vertex $G \cdot v_i$ of $K/G$, so the definition of $\overline{f}$ implies that $f \cdot (G \cdot v_i) \subseteq G \cdot v_i$ for $i \in I$. We can write $f$ as a product of disjoint permutations, $f = \prod_{i \in I} p_i$, where each such permutation $p_i$ permutes only elements in $G \cdot v_i$ with the same action as $f$ on $G \cdot v_i$, and $p_i$ fixes all elements in $\Vertices (K) \setminus G \cdot v_i$. Then each $p_i$ can be expressed as a product of transpositions in $G$, hence $f \in G$. It follows that $\ker (\alpha) \subseteq F \cap G$. Conversely, if $f \in F \cap G$ then, since $f \in G$, $\alpha (f) = \overline{f} \in H$ fixes the $G$-orbits of $\Vertices (K)$, which are the vertices of $K/G$, so $\alpha (f)$ is the identity in $H$, hence $f \in \ker (\alpha)$. Therefore $F \cap G \subseteq \ker (\alpha)$.
\end{proof}

In Subsection~\ref{subsec:method} we employ the theory developed here to provide a methodology for determining model equivalence.

\subsection{Methodology for determining model equivalence}\label{subsec:method}
To compare the simplicial representations corresponding to two models, with respect to equivalence, we can compare the two corresponding orbit spaces that have no further exchangeable vertices, which we refer to as the \emph{final} orbit spaces. These orbit spaces, which may be smaller than the original simplicial complexes, allow for the identification of subsets of equivalent components within each model, and in turn the equivalence of components between the two models.

Employing the theory in Subsection~\ref{subsec:Gaction} for identifying equivalent components within simplicial representations, we now provide a general methodology for determining model equivalence in terms of Operations 1, 2, and 5. There is not necessarily a unique order with which Operations 1, 2, and 5 should be applied to a simplicial representation. Given two simplicial representations, however, the easiest approach is to first reduce the simplicial representations by applying vertex identifications (Operations 1 and 2) when possible, and finally relabelling one of the complexes by a vertex substitution (Operation 5). We will assume that all possible vertex identifications for a given simplicial representation are applied simultaneously. We could apply a proper subset of the possible vertex identifications to the simplicial representation, and then apply the remaining identifications to the corresponding orbit space (see Proposition~\ref{prop:AutOrb}), however applying them all at once minimises the number of required simplicial maps.

The following steps describe the process of finding model equivalences using simplicial representations. For the purpose of automation on computer we can relabel the vertices of the simplicial representations, which correspond to model components, with positive integers so that the simplices of the simplicial representations are subsets of these positive integers. Assume that $K$ and $L$ are two simplicial representations corresponding to two models.

\begin{enumerate}[label=\textbf{Step \arabic*.},leftmargin=!,labelindent=\parindent]
\item \textbf{Find all pairs of exchangeable vertices of $K$ and $L$:} We describe the process for $K$, and the process for $L$ is similar. Two vertices $u$ and $v$ in $\Vertices (K)$ are \emph{exchangeable} if exchanging the labels of the two vertices leaves $K$ unchanged (Definition~\ref{def:Exchange}). All exchangeable vertices of $K$ can be found by exchanging all pairs of vertices of $K$ in succession, however this process can be made more efficient in a number of ways: for example, exchangeable vertices must have the same number of adjacencies, so two vertices with different numbers of adjacencies are not exchangeable. Given a pair of exchangeable vertices $u$ and $v$ in $\Vertices (K)$ there is a corresponding transposition $(u\;v) \in \Sym \big(\Vertices (K)\big)$, and the set of all such transpositions corresponding to exchangeable vertices in $K$ generates a subgroup $G_1$ of $\Sym \big(\Vertices (K)\big)$. We then have an action of $G_1$ on $K$ by automorphisms, from which we construct the orbit space $K/G_1$ with vertices the $G_1$-orbits of the vertices in $K$ (Definition~\ref{def:Orbit}). If $K$ has no exchangeable vertices then $G_1$ is the trivial subgroup which acts as the identity automorphism of $K$, so we can then take $K/G_1$ to be $K$ since these complexes are isomorphic.

A transposition $(u\;v) \in G_1$ acts as an exchange automorphism of $K$ that exchanges the vertices $u$, $v \in K$, and induces a corresponding adjacent-/nonadjacent-vertex identification operation in terms of a surjective simplicial map $p \colon K \to K/\big\langle (u\;v) \big\rangle$ which sends both $u$ and $v$ to a single vertex in $K/\big\langle (u\;v) \big\rangle$ and fixes all other vertices in $K$. Note that $p$ is the projection map (Proposition~\ref{prop:projection}). Applying all permutations in $G_1$ simultaneously corresponds to $G_1$ acting on $K$, and therefore induces a finite number of vertex-identification operations in terms of the projection map $p_1 \colon K \to K/G_1$.
\item \textbf{Find the sequences of orbit spaces corresponding to $K$ and $L$:} Again, we describe the process for $K$, and it is similar for $L$. Finding the exchangeable vertices of $K/G_1$ follows a process analagous to that for $K$ in Step 1, except the vertices of $K/G_1$ are subsets of vertices of $K$. Given a pair of exchangeable vertices $u$ and $v$ in $\Vertices (K/G_1)$ there is a corresponding transposition $(u\;v) \in \Sym \big(\Vertices (K/G_1)\big)$, and the set of all such transpositions corresponding to exchangeable vertices in $K/G_1$ generates a subgroup $G_2$ of $\Sym \big(\Vertices (K/G_1)\big)$. We then have an action of $G_2$ on $K/G_1$ by automorphisms, from which we construct the orbit space $(K/G_1)/G_2$ with vertices the $G_2$-orbits of the vertices in $K/G_1$. The action of $G_2$ on $K/G_1$ induces a finite number of vertex-identification operations in terms of the projection map $p_2 \colon K/G_1 \to (K/G_1)/G_2$.

Continuing in this manner until we obtain an orbit space with no exchangeable vertices, namely the final orbit space, results in the following sequence of simplicial complexes and projection maps $p_i$ representing vertex identifications,
\begin{equation}\label{eq:Kseq}
K \xrightarrow{\enspace p_1 \enspace} K/G_1 \xlongrightarrow{\enspace p_2 \enspace} (K/G_1)/G_2 \xlongrightarrow{\enspace p_3 \enspace} \cdots \xlongrightarrow{\enspace p_m \enspace} \Big( \cdots \big((K/G_1)/G_2\big) \cdots \Big)/G_m,
\end{equation}
where the final orbit space in the sequence, which we denote by $\widehat{K}$, has no exchangeable vertices. Similarly, for the simplicial representation $L$ we have a sequence of subgroups $H_i$ generated by transpositions giving the following sequence of simplicial complexes and simplicial maps $q_i$ representing vertex identifications,
\begin{equation}\label{eq:Lseq}
L \xrightarrow{\enspace q_1 \enspace} L/H_1 \xrightarrow{\enspace q_2 \enspace} (L/H_1)/H_2 \xrightarrow{\enspace q_3 \enspace} \cdots \xrightarrow{\enspace q_n \enspace} \Big( \cdots \big((L/H_1)/H_2\big) \cdots \Big)/H_n,
\end{equation}
where the final orbit space in the sequence, which we denote by $\widehat{L}$, has no exchangeable vertices.

Note that if any complex from the sequence in Equation~\eqref{eq:Kseq} is isomorphic to a complex from the sequence in Equation~\eqref{eq:Lseq}, as unlabelled simplicial complexes, then $\widehat{K}$ and $\widehat{L}$ are isomorphic as unlabelled simplicial complexes. Therefore, it suffices to determine whether or not $\widehat{K}$ and $\widehat{L}$ are isomorphic, which are the simplest complexes in their respective sequences.

With regard to the sequences in Equations~\eqref{eq:Kseq} and \eqref{eq:Lseq} note that the number of vertices in an orbit space is reduced, relative to the previous complex in the sequence, by the number of distinct and nontrivial transpositions in the associated group action. This observation provides a measure of the reduction in the number of possible isomorphisms between the orbit spaces of each simplicial complex.
\item \textbf{Determine all isomorphisms of $\widehat{K}$ and $\widehat{L}$:} An isomorphism between $\widehat{K}$ and $\widehat{L}$ is a bijective simplicial map from $\widehat{K}$ onto $\widehat{L}$, that is, a bijective vertex map from $\Vertices(\widehat{K})$ onto $\Vertices(\widehat{L})$ that sends a simplex in $\widehat{K}$ to a simplex in $\widehat{L}$. We begin by finding all bijective vertex maps from $\Vertices(\widehat{K})$ onto $\Vertices(\widehat{L})$, noting that there may be none. If there are multiple bijective vertex maps then we may reduce the number of these maps by considering properties that must be preserved by a simplicial isomorphism, such as the dimension of simplices and the number of edge adjacencies of vertices. Further, if $\widehat{K}$ and $\widehat{L}$ each have a vertex label containing the same model component then we would generally ensure that the corresponding vertices must be identified by any bijective vertex map, based on the requirement for conceptual equivalence. We could omit the preservation of model components for the bijective vertex maps, however this will likely only increase the number of bijective vertex maps to consider without yielding any new meaningful equivalences between the model concepts. It should not be the case that a vertex label in $\widehat{K}$ has model components in common with two distinct vertices in $\widehat{L}$, as this would indicate a conceptual inconsistency with at least one of the simplicial representations.

Given a bijective vertex map, which we can describe by a correspondence between the positive-integer labels of the vertices in $\widehat{K}$ and $\widehat{L}$, we then try to extend it to a bijection from $\widehat{K}$ onto $\widehat{L}$ by relabelling $\widehat{K}$ with the corresponding vertex labels of $\widehat{L}$ in accordance with the bijective vertex maps. We then check whether the relabelled complex $\widehat{K}$ is the same as $\widehat{L}$, and if so then the bijective vertex map extends canonically to a simplicial isomorphism, so that $\widehat{K}$ and $\widehat{L}$ are isomorphic. If no simplicial isomorphism exists between $\widehat{K}$ and $\widehat{L}$ then they are not isomorphic as complexes.
\item \textbf{Equivalence of $K$ and $L$:} If $K$ and $L$ are equivalent then $\widehat{K}$ and $\widehat{L}$ are isomorphic; therefore, if $\widehat{K}$ and $\widehat{L}$ are not isomorphic then $K$ and $L$ are not equivalent. If $\widehat{K}$ and $\widehat{L}$ are isomorphic then for each simplicial isomorphism we can consider, for each pair of vertices identified between $\widehat{K}$ and $\widehat{L}$ via the isomorphism, whether the set of model components associated with the vertex in $\widehat{K}$ is conceptually related to the set of model components associated with the vertex in $\widehat{L}$. Since there may be more than one isomorphism between $\widehat{K}$ and $\widehat{L}$ we consider the vertex identifications for each such isomorphism. Importantly, an isomorphism between $\widehat{K}$ and $\widehat{L}$ can reveal novel conceptual relations between $K$ and $L$, and therefore the corresponding models.

If there is an isomorphism between $\widehat{K}$ and $\widehat{L}$ that provides a conceptually meaningful relationship between the concepts of the two models then we can use a vertex-substitution operation induced by the isomorphism to transform $\widehat{K}$ into $\widehat{L}$, and then we conclude that $K$ and $L$, and hence the corresponding models, are equivalent. Otherwise, if no isomorphism between $\widehat{K}$ and $\widehat{L}$ provides a conceptually meaningful relationship between the concepts of the two models then we conclude that $K$ and $L$, and hence the corresponding models, are not equivalent.
\item \textbf{Restricting conceptual equivalence:} A simplicial isomorphism between the final orbit spaces $\widehat{K}$ and $\widehat{L}$ provides for the maximum conceptual equivalence between the components of the two models. It may happen, however, that we regard a certain equivalence for a model, corresponding to a vertex identification, as invalid. In this case, we don't consider $\widehat{K}$ and $\widehat{L}$ as equivalent. There may, however, be equivalent simplicial subcomplexes of $K$ and $L$ with respect to a restricted set of vertex identifications corresponding to the valid conceptual equivalences of model components. To determine whether equivalent subcomplexes exist in this restricted sense, we can work backwards from $\widehat{K}$ and $\widehat{L}$ to remove all of the invalid vertex identifications, while retaining the isomorphic correspondence. If we can remove corresponding vertex identifications stepwise from both $\widehat{K}$ and $\widehat{L}$, where each vertex identification is either adjacent for both complexes or nonadjacent for both complexes, and arrive at isomorphic complexes with only valid equivalences between model components then $K$ and $L$ are equivalent, and otherwise not.
\end{enumerate}

\begin{remark}
Here we briefly discuss some computational aspects of our model comparison methodology.
\begin{enumerate}
\item Orbit spaces: Constructing the orbit space of a simplicial complex $K$ requires first determining the exchange automorphisms, which only involves exchanging pairs of vertices in $K$ and checking that $K$ is unchanged. Once all relevant exchange automorphisms are found, the orbit space is constructed from the vertex orbits and the known simplices in $K$.
Isomorphism of simplicial complexes: We need to determine whether the final orbit spaces are isomorphic. In general, checking isomorphism of simplicial complexes is computationally intensive, however our simplicial complexes are labelled (with sets of model components, or corresponding positive integers) and we are usually interested in isomorphisms that preserve model components. This restriction reduces the number of possible isomorphisms to consider.
Automation: Our methodology for model comparison by equivalence consists of three main parts. Given a model, the first part requires the construction of the simplicial representation. This requires a translation of the model into its underlying concepts and their interrelations. In general this is very difficult to automate, as it requires the ability to identify the model concepts and interrelations for a model with any formalism, as well as equivalences between model concepts. The latter is also dependent on context and individual requirements.

\item The second part of our methodology, which we discuss in this article, is completely automatable up to and including the determination of whether the final orbit spaces are isomorphic. This provides for possible equivalences between model concepts. The second part is the most challenging of all three parts, since establishing whether vertices in a simplicial representation are in symmetrical positions is difficult for even small models.

\item The third part requires determining whether the possible equivalences between model concepts are valid, which returns to part one, and is difficult to automate in general.
\end{enumerate}
\end{remark}

\subsection{Example}\label{subsec:example}
In this example we apply our model-comparison methodology to the two main categories of models for developmental pattern formation, namely Turing-pattern (TP) models and positional-information (PI) models. We have described models from these two categories previously \cite{Vittadello2021b}: four TP models, namely activator-inhibitor, substrate depletion, inhibition of an inhibition, and modulation; five PI models, namely linear gradient, synthesis-diffusion-degradation (SDD), opposing gradients, annihilation, and induction-contraction (active modulation). We also discuss these models in the Appendix (Section~\ref{sec:Appendix}). Interestingly, we found that the TP activator-inhibitor model is equivalent to the PI annihilation model from a significant conceptual perspective. This finding was obtained by visual inspection of the two simplicial representations, quite by chance as it is generally difficult to compare simplicial representations visually. Employing the automatable aspect of our methodology for model equivalence as described in this article, however, provides simple and rigorous comparison of simplicial representations, and we now compare all nine models for equivalence.

Of the nine models for developmental pattern formation we consider here, consisting of four TP models and five PI models, three of the models have final orbit spaces that are not isomorphic to the final orbits spaces of any of the other eight models, so each of these three models is not equivalent to any of the other eight models. These three models are PI induction-contraction, TP inhibition of an inhibition, and TP modulation. We therefore consider the remaining six models. Note that the labelling for these simplicial representations, both as positive integers and model components, is described in Vittadello and Stumpf \cite{Vittadello2021b}. For each of these simplicial representations the 0- and 1-simplices are specified by the model, and the higher-dimensional simplices are obtained by forming cliques, where possible, incrementally in dimensions two and higher. For simplicity, in Figures 2 and 3 we show only the 1-skeletons of the simplicial representations.

The three PI models linear gradient, synthesis-diffusion-degradation, and opposing gradients, all have final orbit spaces that are a 0-simplex, so consist of a single vertex (Figure~\ref{fig:Fig2}).
\begin{figure}
\centering\includegraphics[width=1\textwidth]{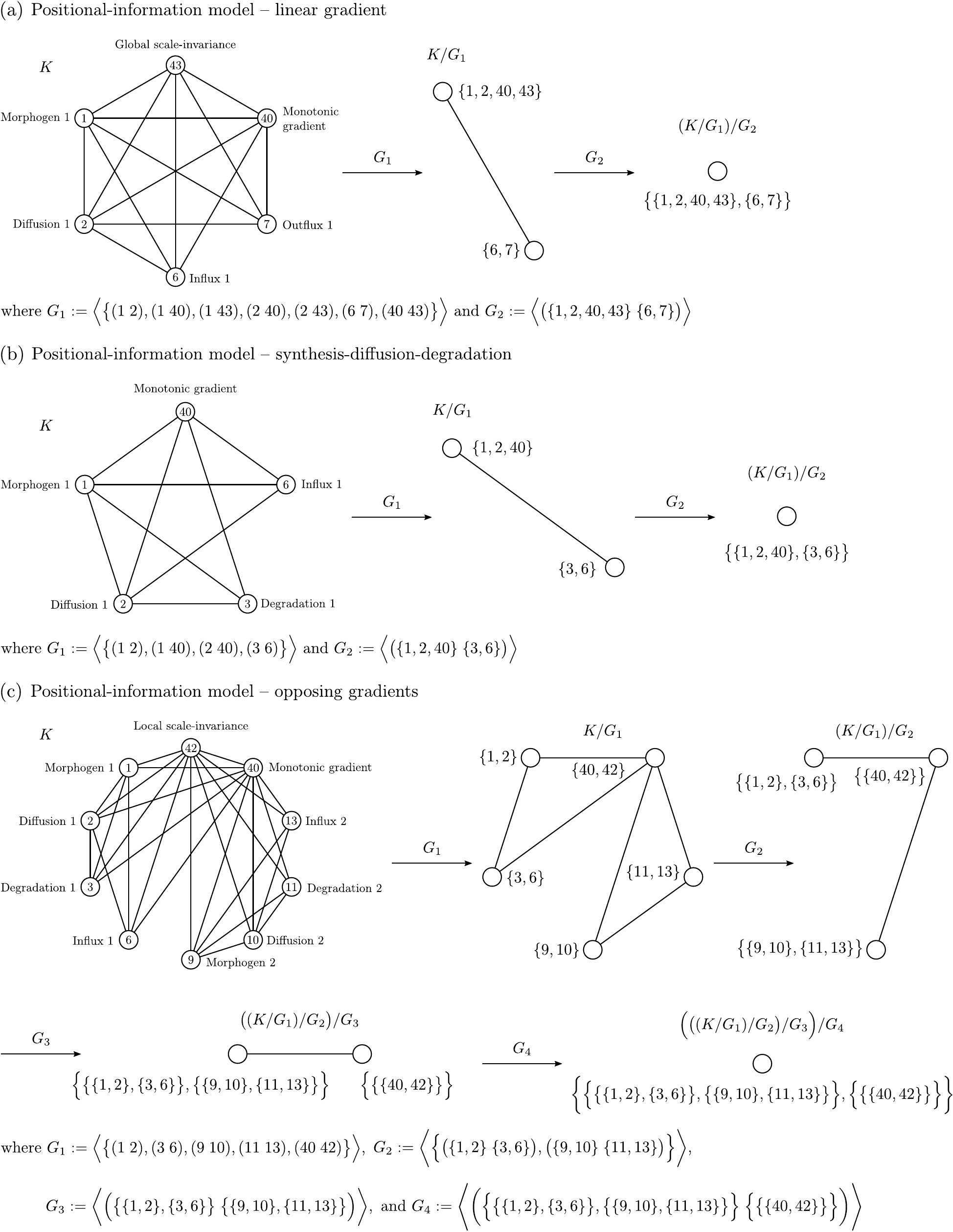}
\caption{1-skeletons of the simplicial representations and orbit spaces for the (a) PI linear gradient, (b) PI synthesis-diffusion-degradation, and (c) PI opposing gradients models.}
\label{fig:Fig2}
\end{figure}
The unlabelled final orbit spaces are therefore isomorphic, so we can compare the corresponding sets of model components to determine model equivalence.

The orbit space for the PI linear gradient model has the label $\big\{\{1,2,40,43\},\{6,7\}\big\}$, and the orbit space for the PI SDD model has the label $\big\{\{1,2,40\},\{3,6\}\big\}$. Equivalence of these two models requires that we consider the model components in $\{1,2,40,43\}$ and $\{6,7\}$ as equivalent to the model components in $\{1,2,40\}$ and $\{3,6\}$, respectively. That is, we need the following: 7 is equivalent to 3; since 1, 2, 40, and 43 are conceptually equivalent, any one of the vertex identifications where 1 and 43 are identified as 1, or 2 and 43 are identified as 2, or 40 and 43 are identified as 40. The equivalence of 7 (Outflux 1) and 3 (Degradation 1) may be reasonable, since they both represent removal of Morphogen 1 from the system. The identification of 43 (Global scale-invariance) with one of 1 (Morphogen 1), 2 (Diffusion 1), or 40 (Monotonic gradient), however, assumes that we are not considering the property of scale-invariance of the morphogen gradient for any model for comparison, which would depend on the required level of conceptual detail. For our purposes, scale-invariance of the morphogen gradient is important for establishing developmental patterning, so is a necessary conceptual detail in the models, and we therefore conclude that the PI linear gradient model is not equivalent to the PI SDD model.

While the PI linear gradient model and PI SDD model each have a single morphogen, the PI opposing gradients model has two morphogens which we regard as an important conceptual detail, so we do not consider the PI opposing gradients model as equivalent to the PI linear gradient model or PI SDD model. Indeed, the label for the orbit space of the PI opposing gradients model shows that the model concepts associated with 1 (Morphogen 1), that is $\big\{\{1,2\},\{3,6\}\big\}$, are identified with the model concepts associated with 9 (Morphogen 2), that is $\big\{\{9,10\},\{11,13\}\big\}$, to effectively reduce to a single morphogen.

In Figure~\ref{fig:Fig3} we consider the PI annihilation model, the TP activator-inhibitor model, and the TP substrate depletion model.
\begin{figure}
\centering\includegraphics[width=1\textwidth]{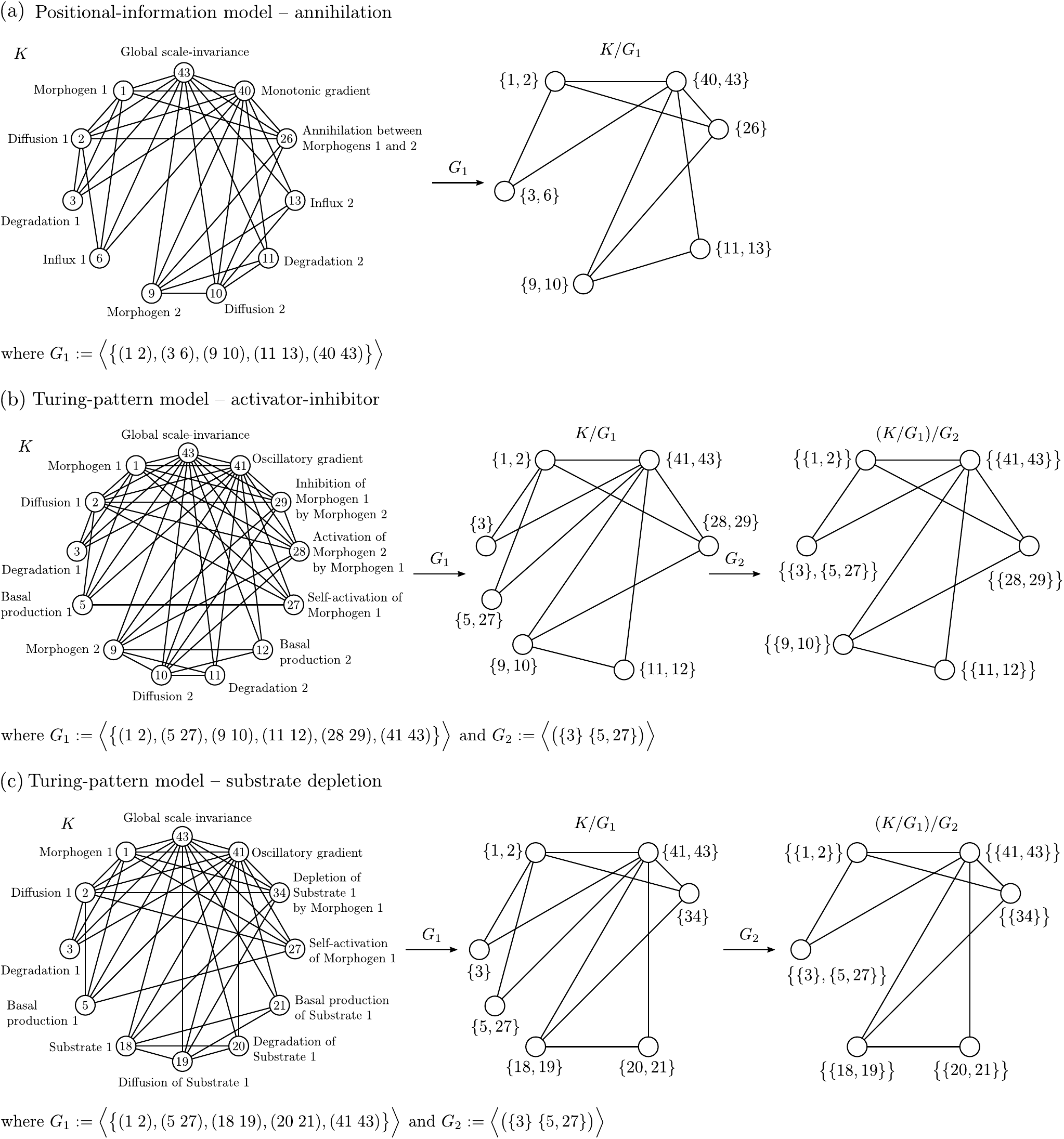}
\caption{1-skeletons of the simplicial representations and orbit spaces for the (a) PI annihilation, (b) TP activator-inhibitor, and (c) TP substrate depletion models.}
\label{fig:Fig3}
\end{figure}
These three models have isomorphic unlabelled final orbit spaces, so we can compare the corresponding sets of model components to determine model equivalence. Note that there are two possible isomorphisms between any two of these three orbit spaces when unlabelled, however only one of these isomorphisms preserves the labelling when the same label is in both complexes. First we consider the equivalence of the PI annihilation model and the TP activator-inhibitor model, which we established in Vittadello and Stumpf \cite{Vittadello2021b}, however now we employ the automatable approach using orbit spaces. We regard the model components 6, 13, 26, and 40 as conceptually equivalent to the model components $\{5,27\}$, 12, $\{28,29\}$, and 41, respectively. In particular, we consider vertex identifications where 5 and 27 are identified as 6, and 28 and 29 are identified as 26. We conclude that the PI annihilation model and the TP activator-inhibitor model are equivalent from a significant conceptual perspective. For further discussion of this example, including a more detailed consideration of the required partial operations, see \cite[Subsubsection 3.2.4]{Vittadello2021b}.

Now consider the equivalence of the TP activator-inhibitor model and the TP substrate depletion model, which would require that the model components 9 (Morphogen 2), 10 (Diffusion 2), 11 (Degradation 2), 12 (Basal production 2), and $\{28,29\}$ (Activation of Morphogen 2 by Morphogen 1, Inhibition of Morphogen 1 by Morphogen 2) are conceptually equivalent to the model components 18 (Substrate 1), 19 (Diffusion of Substrate 1), 20 (Degradation of Substrate 1), and 21 (Basal production of Substrate 1), 34 (Depletion of Substrate 1 by Morphogen 1), respectively. We consider that the agent Morphogen 2 in the TP activator-inhibitor model is conceptually equivalent to the agent Substrate 1 in the TP substrate depletion model. Further, the two reactions given corresponding to 28 and 29 form an inhibitory cycle in the TP activator-inhibitor model, which we consider to be conceptually equivalent to the reaction corresponding to 20 in the TP substrate depletion model (compare with the discussion in Vittadello and Stumpf \cite{Vittadello2021b} regarding the equivalence of the PI annihilation model and the TP activator-inhibitor model). We conclude that the TP activator-inhibitor model and the TP substrate depletion model are equivalent from a significant conceptual perspective, which is not surprising since the TP substrate depletion model is based very closely on the TP activator-inhibitor model. Since the relation of model equivalence is transitive, it also follows that the PI annihilation model is equivalent to the TP substrate depletion model.

\subsection{$G$-representations}\label{subsec:Grep}
In this subsection we provide an overview of an alternative mathematical framework for our model-equivalence methodology by associating a group to a simplicial representation. While the group-theoretic framework is closely related to the simplicial-complex framework for model equivalence, the group-theoretic approach provides an alternative mathematical perspective that allows for the application of group-theoretic techniques for model comparison. While the mathematical details of the two frameworks differ, the underlying methodology is the same, so we provide only the minimal details required to develop the group-theoretic framework for model equivalence.

We first describe a standard construction of a group from a given set.

\begin{proposition}\label{prop:PowerSetGroup}
Let $S$ be a set and let $2^S$ be the power set of $S$. With the closed binary operation of symmetric difference, $\triangle \colon 2^S \times 2^S \to 2^S$, the pair $(2^S,\triangle)$ is a group.
\end{proposition}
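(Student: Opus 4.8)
The plan is to verify the four group axioms directly for the pair $(2^S, \triangle)$, since symmetric difference is a familiar and well-behaved operation. Recall that for sets $A, B \subseteq S$ we have $A \triangle B = (A \setminus B) \cup (B \setminus A) = (A \cup B) \setminus (A \cap B)$, and that closure is immediate: the symmetric difference of two subsets of $S$ is again a subset of $S$, so $\triangle$ is indeed a well-defined binary operation on $2^S$.

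First I would establish the identity element. The empty set $\emptyset$ serves as the identity, since for any $A \in 2^S$ we have $A \triangle \emptyset = A = \emptyset \triangle A$; this follows directly from the definition, as $A$ and $\emptyset$ share no elements. Next I would establish inverses, which is the most characteristic feature of this group: every element is its own inverse. Indeed, for any $A \in 2^S$ we have $A \triangle A = \emptyset$, since $A \setminus A = \emptyset$, so each element has an inverse (namely itself) and the group is in fact an elementary abelian $2$-group. Commutativity, though not required for the group axioms, is also immediate from the symmetry of the definition: $A \triangle B = B \triangle A$.

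The main obstacle, and the only step requiring genuine care, is associativity: showing $(A \triangle B) \triangle C = A \triangle (B \triangle C)$ for all $A, B, C \in 2^S$. The cleanest approach is to pass to indicator functions and reduce to arithmetic modulo $2$. For each element $x \in S$, let $\mathbbm{1}_A(x) \in \{0,1\}$ denote membership of $x$ in $A$; then membership in $A \triangle B$ corresponds to $\mathbbm{1}_A(x) + \mathbbm{1}_B(x) \pmod 2$, that is, symmetric difference corresponds to addition in $\ZZ/2\ZZ$ computed pointwise over $S$. Since addition in $\ZZ/2\ZZ$ is associative, associativity of $\triangle$ follows coordinatewise: both $(A \triangle B) \triangle C$ and $A \triangle (B \triangle C)$ have indicator function $\mathbbm{1}_A(x) + \mathbbm{1}_B(x) + \mathbbm{1}_C(x) \pmod 2$ at each $x \in S$, and two subsets with the same indicator function on all of $S$ are equal.

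Alternatively, one could verify associativity by the element-chasing argument, checking that $x$ lies in $(A \triangle B) \triangle C$ if and only if $x$ belongs to an odd number of the sets $A$, $B$, $C$, and symmetrically for $A \triangle (B \triangle C)$; but the indicator-function formulation makes this transparent and avoids a lengthy case analysis. Having verified closure, associativity, the existence of the identity $\emptyset$, and the existence of inverses (each element being self-inverse), we conclude that $(2^S, \triangle)$ is a group, completing the proof.
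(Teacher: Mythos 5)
Your proof is correct and follows essentially the same route as the paper: closure, identity $\emptyset$, and self-inverses are dispatched immediately, and associativity is reduced to associativity of addition modulo 2 via characteristic (indicator) functions, which is exactly the paper's argument. The only cosmetic difference is that the paper formalises the pointwise mod-2 sum as an operation $\oplus$ on characteristic functions and writes the chain of equalities explicitly, whereas you state the same identity $\mathbbm{1}_{A \triangle B} = \mathbbm{1}_A + \mathbbm{1}_B \pmod{2}$ and conclude coordinatewise.
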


\begin{proof}
Under the operation $\triangle$, $2^S$ is closed, the identity is the empty set $\emptyset$, and each element is self-inverse. It remains to show associativity of the operation, so let $A$, $B$, and $C \in 2^S$. We show that $(A \triangle B) \triangle C = A \triangle (B \triangle C)$. Setting some notation, for a set $D \in 2^S$ we denote by $\mathbbm{1}_D \colon S \to \{0,1\}$ the characteristic function of $D$ such that, for $x \in S$, $x \mapsto 1$ if and only if $x \in D$. Further, denote by $\oplus$ the associative operation such that for any two characteristic functions of sets $D$, $E \in 2^S$ we set $(\mathbbm{1}_D \oplus \mathbbm{1}_E)(x) := \mathbbm{1}_D (x) + \mathbbm{1}_E (x) \pmod{2}$. Associativity of $\triangle$ then follows from
\begin{equation*}
\mathbbm{1}_{(A \triangle B) \triangle C} = \mathbbm{1}_{A \triangle B} \oplus \mathbbm{1}_C = (\mathbbm{1}_A \oplus \mathbbm{1}_B) \oplus \mathbbm{1}_C = \mathbbm{1}_A \oplus (\mathbbm{1}_B \oplus \mathbbm{1}_C) = \mathbbm{1}_A \oplus \mathbbm{1}_{B \triangle C} = \mathbbm{1}_{A \triangle (B \triangle C)}.\qedhere
\end{equation*}
\end{proof}

To a simplicial representation we now associate a group as in Proposition~\ref{prop:PowerSetGroup}, which we call a $G$-representation to indicate that the representation has the structure of a group. While we could construct the group without any reference to the simplicial representation, the group encodes the same model structure in terms of model components and their interconnections, so it is natural to indicate the relationship between the group and the corresponding simplicial representation.

\begin{definition}[\bf{$G$-representation}]\label{def:Grep}
Let $K$ be a simplicial representation of a model. The \emph{$G$-representation} of the model is the group $G_K := (2^K,\triangle)$.
\end{definition}

For a simplicial representation $K$, the elements of $G_K$ are subsets of simplicies from $K$. These subsets are of three types: a set containing a single simplex from $K$ is identified as a simplex; a set containing one or more simplices from $K$ is a simplicial subcomplex of $K$ if it is closed under taking faces, in particular a set containing just a vertex is both a simplex and a simplicial complex; a set containing two or more simplices from $K$ that is not a simplicial subcomplex of $K$ is a set of simplices.

\begin{notation}
Let $K$ be a simplicial representation of a model, with corresponding $G$-representation $G_K$. The subset $\Vertices (G_K) := \Big\{\, \big\{\{v\}\big\} \mid v \in \Vertices (K) \,\Big\}$ of $G_K$ is referred to as the vertices of $G_K$. Further, we denote by $\rho_K \colon 2^{\Vertices (G_K)} \setminus \emptyset \to K$ the map such that $\rho (S) = \big\{\cup_{i \in I} \{s_i\}\big\}$ for $S = \Big\{\big\{\{s_i\}\big\}\Big\}_{i \in I} \in 2^{\Vertices (G_K)} \setminus \emptyset$, which sends a set of vertices in $G_K$ to the simplex spanned by the corresponding vertices in $K$.
\end{notation}

Since simplicial representations are labelled simplicial complexes, the elements of $G$-representations are also labelled, and there is a canonical bijection $K \mapsto G_K$ from the set of simplicial representations of models onto the set of $G$-representations of models.

\begin{example}
Let $K$ be the 1-simplex with vertices labelled 1 and 2, that is, $K = \big\{ \{1\}, \{2\}, \{1,2\} \big\}$. Then $G_K = \Big\{ \emptyset, \big\{\{1\}\big\}, \big\{\{2\}\big\}, \big\{\{1,2\}\big\}, \big\{\{1\},\{2\}\big\}, \big\{\{1\},\{1,2\}\big\}, \big\{\{2\},\{1,2\}\big\}, \big\{\{1\},\{2\},\{1,2\}\big\}\Big\}$. Note that $\big\{\{1\}\big\}$ and $\big\{\{2\}\big\}$ correspond to both simplices and simplicial subcomplexes of $K$, $\big\{\{1,2\}\big\}$ corresponds to a simplex of $K$, $\big\{\{1\},\{2\}\big\}$ and $\big\{\{1\},\{2\},\{1,2\}\big\} = K$ correspond to simplicial subcomplexes of $K$, while each of the elements $\big\{\{1\},\{1,2\}\big\}$ and $\big\{\{2\},\{1,2\}\big\}$ are not closed under taking faces so are sets of simplices from $K$.
\end{example}

While the group $G_K$ contains subsets of $K$ that are not subcomplexes, we can describe $G_K$ in terms of the simplices in $K$. Recall that for a prime $p$, a $p$-group is a group in which every element has order a power of $p$. Since every nontrivial element of $G_K$ has order 2, $G_K$ is therefore an elementary commutative 2-group. Further, since $G_K$ is a finite $p$-group it follows from the Burnside Basis Theorem \cite[Chapter 5, Page 124, Theorem 5.50]{Rotman1995} that any two minimal generating sets of $G_K$ have equal cardinality, here equal to $\lvert K \rvert$, and are therefore generating sets of smallest cardinality.

\begin{proposition}\label{prop:Generators}
Let $K$ be a simplicial representation of a model. Then $\K := \big\{\, \{\sigma\} \mid \sigma \in K \,\big\}$ is a minimal set of generators for $G_K$.
\end{proposition}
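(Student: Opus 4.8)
The plan is to verify the two defining properties of a minimal generating set separately: that $\K$ generates $G_K$, and that no proper subset of $\K$ does. Throughout I would exploit the fact, already recorded above, that every nontrivial element of $G_K$ has order $2$, so that $G_K$ is naturally an $\FF_2$-vector space under $\triangle$; from this viewpoint a minimal generating set is precisely a basis, and $\K$ is the candidate standard basis indexed by the simplices of $K$.

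For generation, the key elementary observation is that the symmetric difference of finitely many \emph{distinct} singletons is their union: if $\sigma_1, \ldots, \sigma_n$ are distinct simplices of $K$ then $\{\sigma_1\} \triangle \cdots \triangle \{\sigma_n\} = \{\sigma_1, \ldots, \sigma_n\}$. Consequently an arbitrary element $A \in G_K$, which is just a subset $A = \{\sigma_1, \ldots, \sigma_n\} \subseteq K$, is the product under $\triangle$ of the generators $\{\sigma_i\} \in \K$, so $\langle \K \rangle = G_K$. I would phrase this cleanly through the characteristic-function description used in the proof of Proposition~\ref{prop:PowerSetGroup}: the assignment $A \mapsto \mathbbm{1}_A$ identifies $(2^K,\triangle)$ with the group of $\{0,1\}$-valued functions on $K$ under the operation $\oplus$ of that proof, and carries each generator $\{\sigma\}$ to the indicator $\mathbbm{1}_{\{\sigma\}}$; these standard basis vectors manifestly span.

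For minimality it suffices to show that discarding any one generator destroys the generating property. Fix a simplex $\tau \in K$ and consider the subgroup generated by $\K \setminus \big\{\{\tau\}\big\}$. Every element of this subgroup is a symmetric difference of singletons $\{\sigma\}$ with $\sigma \neq \tau$, hence a subset of $K \setminus \{\tau\}$; in particular no such element contains $\tau$, so $\{\tau\} \notin \big\langle \K \setminus \big\{\{\tau\}\big\} \big\rangle$. Thus $\K \setminus \big\{\{\tau\}\big\}$ fails to generate $G_K$, and since $\tau$ was arbitrary, $\K$ is minimal. (Equivalently, the $\tau$-coordinate of $\mathbbm{1}_A$ separates $\{\tau\}$ from the span of the remaining basis vectors.)

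I do not anticipate a genuine obstacle: both steps reduce to the standard fact that the indicators $\mathbbm{1}_{\{\sigma\}}$ form the standard basis of the $\FF_2$-vector space $G_K$, so $\K$ is a basis and therefore minimal of cardinality $\lvert K \rvert$, consistent with the Burnside Basis Theorem count cited above. The only point demanding care is the generation identity — that symmetric difference returns the union only for \emph{distinct} singletons, to avoid cancellation — which is precisely the same coordinate-tracking that makes minimality immediate.
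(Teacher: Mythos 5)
Your proposal is correct and follows essentially the same route as the paper's proof: generation by writing each nonempty $S \in G_K$ as $\triangle_{\sigma \in S} \{\sigma\}$, and minimality by observing that $\{\tau\} \notin \big\langle \K \setminus \big\{\{\tau\}\big\} \big\rangle$ for each $\tau \in K$. The only difference is cosmetic --- your $\FF_2$-vector-space/standard-basis framing, and your explicit justification of the minimality step (elements generated without $\{\tau\}$ are subsets of $K \setminus \{\tau\}$), which the paper simply asserts.
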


\begin{proof}
Let $S \in G_K$ be nontrivial. Then $S \subseteq \K$ by the definition of $G_K$, so by associativity of the symmetric difference we have $S = \triangle_{\sigma \in S} \{\sigma\}$, and it follows that $\K$ generates $G_K$ as a group. Further, $\K$ is minimal since, for each $\{\sigma\} \in \K$, $\{\sigma\} \notin \big\langle \K \setminus \{\sigma\} \big\rangle \subseteq G_K$.
\end{proof}

Since $G_K$ is an elementary commutative 2-group it can be simply expressed as an internal direct sum of cyclic subgroups of subcomplexes of $K$, or isomorphically as an external direct sum of copies of $\ZZ / 2\ZZ$.

\begin{proposition}\label{prop:DirectSum}
Let $K$ be a simplicial representation of a model. Then $G_K \cong \bigoplus_{\sigma \in K} \big\langle \{\sigma\} \big\rangle \cong \bigoplus_{\sigma \in K} \ZZ / 2\ZZ$.
\end{proposition}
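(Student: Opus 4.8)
The plan is to exhibit both isomorphisms explicitly, exploiting that $G_K$ is an elementary commutative 2-group (every nontrivial element is self-inverse, as already noted). First I would observe that for each simplex $\sigma \in K$ the cyclic subgroup $\big\langle \{\sigma\} \big\rangle = \{\emptyset, \{\sigma\}\}$ has order $2$ and is therefore isomorphic to $\ZZ/2\ZZ$. This gives the second isomorphism $\bigoplus_{\sigma \in K} \big\langle \{\sigma\} \big\rangle \cong \bigoplus_{\sigma \in K} \ZZ/2\ZZ$ at once, as soon as the internal direct-sum decomposition is in hand.

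For the external isomorphism I would define the characteristic-function map $\Psi \colon G_K \to \bigoplus_{\sigma \in K} \ZZ/2\ZZ$ sending a subset $S \in 2^K$ to the tuple $\big( \mathbbm{1}_S (\sigma) \big)_{\sigma \in K}$ of its indicator values reduced modulo $2$, using the notation $\mathbbm{1}_S$ from the proof of Proposition~\ref{prop:PowerSetGroup}. Since $K$ is finite the direct sum coincides with the direct product, and $\Psi$ is a bijection because a subset of $K$ is determined by its characteristic function. That $\Psi$ is a homomorphism is precisely the identity $\mathbbm{1}_{A \triangle B} = \mathbbm{1}_A \oplus \mathbbm{1}_B$ established in the proof of Proposition~\ref{prop:PowerSetGroup}, applied coordinatewise; hence $\Psi$ is an isomorphism.

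It then remains to upgrade the generation statement of Proposition~\ref{prop:Generators} to the internal direct-sum decomposition $G_K = \bigoplus_{\sigma \in K} \big\langle \{\sigma\} \big\rangle$. Since $G_K$ is abelian the subgroups $\big\langle \{\sigma\} \big\rangle$ automatically commute, and by Proposition~\ref{prop:Generators} they jointly generate $G_K$; so the only thing left to check is independence, namely $\big\langle \{\sigma\} \big\rangle \cap \big\langle \K \setminus \{\sigma\} \big\rangle = \{\emptyset\}$ for every $\sigma$. As $\big\langle \{\sigma\} \big\rangle$ has just two elements, this intersection is either trivial or all of $\big\langle \{\sigma\} \big\rangle$, and the latter would force $\{\sigma\} \in \big\langle \K \setminus \{\sigma\} \big\rangle$, which is exactly what the minimality half of Proposition~\ref{prop:Generators} excludes. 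Thus the intersection is trivial for each $\sigma$, which yields the internal direct sum and completes the chain of isomorphisms.

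The argument is essentially bookkeeping and I do not expect a genuine obstacle; the one point demanding care is the passage from minimality of the generating set $\K$ to the independence condition required for a bona fide direct sum, since minimality of a generating set is in general strictly weaker than independence --- it is only the order-$2$ structure of each $\big\langle \{\sigma\} \big\rangle$ that makes the two notions coincide here. A secondary point worth flagging is the implicit finiteness of $K$, which is what permits identifying the direct sum with the direct product and treating $\Psi$ as a bijection with no support conditions.
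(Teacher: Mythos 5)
Your proof is correct, but it takes a genuinely different route from the paper. The paper's proof is short and top-down: it invokes the fundamental theorem of finite commutative groups to assert that $G_K$ decomposes as a direct sum of primary cyclic groups, then uses the minimality of the generating set $\K$ (Proposition~\ref{prop:Generators}) to identify the summands as the $\big\langle \{\sigma\} \big\rangle$, each of order $2$. Your argument avoids the structure theorem entirely: you exhibit the external isomorphism explicitly via the characteristic-function map $\Psi$ (reusing the identity $\mathbbm{1}_{A \triangle B} = \mathbbm{1}_A \oplus \mathbbm{1}_B$ from the proof of Proposition~\ref{prop:PowerSetGroup}), and you verify the internal decomposition by the standard criterion for abelian groups --- generation plus independence --- deducing independence from minimality through the observation that $\big\langle \{\sigma\} \big\rangle \cap \big\langle \K \setminus \{\sigma\} \big\rangle$ is a subgroup of a group of order $2$. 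This buys self-containedness and transparency: your proof makes explicit exactly where minimality enters and why the order-$2$ structure is what lets minimality do the work of independence (a distinction the paper glosses over, since passing from ``direct sum of primary cyclic groups'' to the \emph{specific} decomposition by the generators $\{\sigma\}$ tacitly relies on the Burnside-basis / $\FF_2$-vector-space viewpoint mentioned only in the surrounding text). The cost is length, and a mild redundancy in your own write-up: once you have the internal decomposition $G_K = \bigoplus_{\sigma \in K} \big\langle \{\sigma\} \big\rangle$ together with $\big\langle \{\sigma\} \big\rangle \cong \ZZ/2\ZZ$, the explicit map $\Psi$ is no longer needed (and conversely $\Psi$ plus the order-$2$ observation already yields the full chain), so either half of your argument could be dropped.
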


\begin{proof}
By the fundamental theorem of finite commutative groups (see \cite[Chapter 6, Page 128, Theorem 6.5]{Rotman1995}), $G_K$ is a direct sum of primary cyclic groups. Since $\K$ is a minimal set of generators for $G_K$ by Proposition~\ref{prop:Generators}, we therefore have $G_K \cong \bigoplus_{\sigma \in K} \big\langle \{\sigma\} \big\rangle$. Finally, each $\{\sigma\} \in \K$ has order 2 in $G_K$, so $\big\langle \{\sigma\} \big\rangle \cong \ZZ / 2\ZZ$.
\end{proof}

Viewing $G_K$ as the direct sum $\bigoplus_{\sigma \in K} \big\langle \{\sigma\} \big\rangle$ highlights the structure of the group, and allows for easier characterisation of, for example, homomorphisms between such groups.

Following from the definition of distance between two simplicial representations $K$ and $L$ of two models \cite{Vittadello2021b}, we can define the distance between the two $G$-representations $G_K$ and $G_L$ as the cardinality of the symmetric difference of $K$ and $L$. Indeed, this distance between $G$-representations can be formalised as a metric space, similar to the metric space of simplicial representations, and is also isometric to the latter.

The group $G_K$ associated with a simplicial representation $K$ is itself a representation of the model corresponding to $K$. The group elements of $G_K$ have labels that are induced by the labelled simplices contained in the group elements, which in turn have labels induced by the component labels of the model. The labels of the elements of $G_K$ must be accounted for by group homomorphisms when required. We therefore introduce our concept of a \emph{label-preserving group homomorphism}.

\begin{definition}[\textbf{Label-preserving group homomorphism}]
Let $G_K$ and $G_L$ be two $G$-representations associated with two simplicial representations $K$ and $L$ of two models. A group homomorphism $\phi \colon G_K \to G_L$ is \emph{label preserving} if whenever the group elements $g \in G_K$ and $h \in G_L$ have the same label then $\phi (g) = h$.
\end{definition}

Our definition of the equivalence of two $G$-representations of models is as follows.

\begin{definition}[\textbf{Equivalence of $G$-representations}]\label{def:EquivGRep}
Let $\C$ be the set of all components that appear in the collection of models under consideration for comparison, and let $\G$ be a collection of $G$-representations $G_K$ associated with simplicial representations which correspond to models generated by subsets from $\C$. Two $G$-representations $G_K$, $G_L \in \G$ are \emph{equivalent} if and only if there exist two (possibly empty) sequences of invertible partial operations $( f_i )_{i=0}^m$ and $( g_i )_{i=0}^n$ on $\G$ such that $f_m \circ \cdots \circ f_1 \circ f_0 (G_K) = g_n \circ \cdots \circ g_1 \circ g_0 (G_L)$.
\end{definition}

Note that the set $\{\, (G_K,G_L) \in \G \times \G \mid \text{$G_K$ and $G_L$ are equivalent} \,\}$ is an equivalence relation on $\G$. For the equivalence of $G$-representations to be conceptually meaningful, just as for the equivalence of simplicial representations, we need to ensure that the employed partial operations are themselves conceptually meaningful. We consider specific operations on $G$-representations that are analogous to those for simplicial representations. For the $G$-representations, however, the operations are now induced by label-preserving group homomorphisms.

For brevity we only outline the operations on $G$-representations, as they correspond closely with the operations for simplicial operations. Note that for $G$-representations the action of an adjacent/nonadjacent-vertex identification does not send two vertices to the same vertex as for simplicial representations, but rather removes all group elements associated with one of the two vertices. This action is essentially the same as that for simplicial representations, and allows for a description in terms of group homomorphisms. Further, we omit definitions for the vertex split operation and the inclusion operation, both of which can be defined in terms of canonical injections between the direct-sum decompositions of appropriate $G$-representations, since we explicitly describe the equivalence of $G$-representations with the operations of adjacent/nonadjacent-vertex identifications and vertex substitution, while the vertex split and inclusion operations are employed implicitly. Similar to the operations for simplicial representations, all five of these partial operations on a set of $G$-representations are invertible for suitable domains of definition, and the equivalence of $G$-representations must be based on the admissibility of the partial operations.

For all operations, let $\C$ be the set of all components of models under consideration, and let $G_K$ and $G_L$ be $G$-representations associated with the two simplicial representations $K$ and $L$ with labels from $\C$.

\begin{description}
\item[Adjacent-vertex identification:] Suppose $\big\{ \{u, v\} \big\} \in G_K$, where $u$, $v \in \Vertices (K)$, and that assumptions hold for $u$ and $v$ corresponding to those in Definition~\ref{def:Op1}. A label-preserving group homomorphism $\phi_1 \colon G_K \to G_L$ is an \emph{adjacent-vertex identification} if $G_L = \bigoplus_{ \{ \sigma \in K \mid v \notin \sigma \}} \big\langle \{\sigma\} \big\rangle$ and $\phi_1 \colon \bigoplus_{\sigma \in K} \big\langle \{\sigma\} \big\rangle \to \bigoplus_{ \{ \sigma \in K \mid v \notin \sigma \} } \big\langle \{\sigma\} \big\rangle$ is the canonical projection.
\item[Nonadjacent-vertex identification:] Suppose $\big\{ \{u\} \big\}$, $\big\{ \{v\} \big\} \in G_K$, $\big\{ \{u, v\} \big\} \notin G_K$, and that assumptions hold for $u$ and $v$ corresponding to those in Definition~\ref{def:Op2}. A label-preserving group homomorphism $\phi_2 \colon G_K \to G_L$ is a \emph{nonadjacent-vertex identification} if $G_L = \bigoplus_{ \{ \sigma \in K \mid v \notin \sigma \}} \big\langle \{\sigma\} \big\rangle$ and $\phi_2 \colon \bigoplus_{\sigma \in K} \big\langle \{\sigma\} \big\rangle \to \bigoplus_{ \{ \sigma \in K \mid v \notin \sigma \} } \big\langle \{\sigma\} \big\rangle$ is the canonical projection.
\item[Vertex substitution:] Suppose $u \in \Vertices (K)$, $v \in \Vertices (L)$, and $\Vertices (K) \setminus \{u\} = \Vertices (L) \setminus \{v\}$. Let $\pi \colon \Vertices (K) \to \Vertices (L)$ be a map that is injective except that $\pi (u) = v$. A label-preserving group isomorphism $\phi_5 \colon G_K \to G_L$ is a \emph{vertex substitution} if, for each $\sigma \in K$, $\phi_5 (\{\sigma\}) = \big\{\{\, \pi (w) \mid w \in \sigma \,\}\big\}$.
\end{description}

To determine whether there are conceptually equivalent vertices in a $G$-representation $G_K$, we can employ group actions on $G_K$ that correspond to exchange automorphisms for simplicial representations. Recall that an action of a group $H$ on a group $G$ is a group homomorphism $\Theta \colon H \to \Aut (G)$. While we required orbit spaces for vertex identifications when considering simplicial representations, for $G$-representations we have no need for orbit spaces and can employ $G$-representations directly. The methodology for determining model equivalence detailed in Subsection~\ref{subsec:method} is easily adapted for $G$-representations.

\section{Conclusion} \label{sec:Conclusion}
In this article we develop a rigorous and automatable methodology for determining whether vertices in a simplicial representation are conceptually related, corresponding to conceptually-related model components, and then identifying these vertices using simplicial operations. This process is the main consideration of comparison by equivalence as defined in Vittadello and Stumpf \cite{Vittadello2021b}, and exploits the symmetry associated with conceptually equivalent vertices in a simplicial representation by constructing group actions on the simplicial representations. We develop the required mathematical theory for group actions on simplicial complexes, and demonstrate how this methodology greatly simplifies and facilitates the process of determining model equivalence by reducing the initial simplicial representations to simplicial complexes with no conceptually equivalent vertices. Our approach to model comparison by equivalence provides a rigorous and efficient framework for comparing the underlying conceptual structure of mathematical models of a particular biological system, as well as mathematical models of different biological systems.

The important point to note is that our approach, like our previous one, differs from the overwhelming majority of model comparison approaches by being independent of outputs: it is solely based on the underlying conceptual structure of the model. Our definition of model structure is also more inclusive than, for example, motif-based definitions\cite{Alon2007}, as these are known to fall short of constraining dynamics \cite{Ingram2006}. By including model properties such as boundary conditions and symmetry relationships, for example, as components of the simplicial complexes, our approach here is capable of capturing even subtle differences in the dynamics. Our methodology is applicable to all models irrespective of mathematical formalism, and allows for comparison from different conceptual perspectives.

Model comparison is essential for managing the seemingly heterogeneous and possibly numerous models associated with a system. We conclude this analysis with a discussion of an alternative mathematical framework for our model comparison methodology by representing models as groups, which provides for the direct application of techniques from group theory within our model-comparison methodology. That we can represent our model comparison methodology with two distinct mathematical frameworks shows the flexibility of our approach, which is important for the utilisation of a broad range of mathematical techniques when working on the difficult problem of model comparison.

Modelling in biology cannot rely on the same fundamental principles --- such as symmetries and conservation laws --- that have been so successful in physics. Instead, statistical methods \cite{Kirk2013}, robustness analysis \cite{Bates2011}, or algebraic methods \cite{Araujo2018} are required to narrow down the vast `universe' of potential mathematical models in order to identify those models that can reasonably explain a given scenario. The approach developed here is capable of taking these models, including large sets of good candidates \cite{Scholes2019}, and distilling design principles from this analysis. Design principles, or the conceptual bases for mathematical models describing a given system, are arguably our best chance to gain conceptual mechanistic insights into biological systems.

\section{Appendix} \label{sec:Appendix}
Here we provide the details of the mathematical models that we use in the example in Subsection~\ref{subsec:example}. We also provide a brief discussion on representing the models as simplicial complexes.

These models are from the two main classes of models for developmental pattern formation, namely Turing-pattern (TP) models and positional-information (PI) models. We consider the patterning to occur within a two-dimensional rectangular domain, with zero-flux boundary conditions on the two sides parallel to the morphogen concentration gradient. We assume that the velocities of the cytoplasm and the growing tissue are negligible; we therefore assume no advection. Further details of the models and the associated simplicial representations, including the vertex numbers associated with the simplicial representations, are in Vittadello and Stumpf~\cite{Vittadello2021b}. Note that the boundary conditions for the TP models may be Dirichlet, Neumann, Robin, or periodic \cite{Dillon1994,Varea1997,Barrio1999}.

\subsection{Positional-information: Linear gradient}
A linear concentration profile of a morphogen results when the production and degradation of the morphogen occur outside and on opposite sides of the tissue domain, and the morphogen undergoes passive diffusion along the domain from the side where it is produced to the side where it is degraded \cite{Stumpf1967,Wolpert1969,Crick1970,Capek2019}. Mathematically, the steady-state morphogen concentration in this system satisfies Laplace’s equation, which yields global scale-invariant positional information. Specifically, if the tissue length is $L$ with initial position at $x = 0$ then the morphogen concentration $m(x,t)$ can be modelled as
\begin{equation}\label{eq:PI1}
\frac{\partial m}{\partial t} = D \frac{\partial^2 m}{\partial x^2},
\end{equation}
where $D$ is the morphogen diffusivity. The steady-state solution of Equation~\ref{eq:PI1} is the linear equation $m(x) = ax+b$, for arbitrary constants $a$, $b \in \RR$. The boundary conditions are influx at one end and outflux at the other end, and we may assume without loss of generality that influx occurs at $x = 0$ and outflux at $x = L$. The original Dirichlet boundary conditions specify the constant concentrations $m(0,t) = m_0 > 0$ and $m(L,t) = m_L \ge 0$, with $m_0 > m_L$, so the solution is $m(x) = \big((m_L - m_0)/L\big) x + m_0$. Monotonically decreasing linear gradients can also be achieved with Neumann boundary conditions at one boundary and Dirichlet boundary conditions at the opposite boundary.

The simplicial representation of the linear model is shown in Figure~\ref{fig:Fig2}(a). Note that the simplicial complex is 4-dimensional, however we only show the 1-skeleton of the simplicial complex for simplicity.

\subsection{Positional-information: Synthesis-diffusion-degradation (SDD)}
In the SDD model a morphogen gradient forms by morphogen production from a localised source at the boundary combined with morphogen diffusion and uniform degradation throughout the tissue \cite{Wartlick2009,Umulis2013,Capek2019}. Mathematically, the morphogen concentration $m(\bm{x},t)$ can be modelled as
\begin{equation}\label{eq:PI2}
\frac{\partial m}{\partial t} = D \nabla^2 m - km,
\end{equation}
where $D$ is the morphogen diffusivity, and $k$ is the morphogen degradation rate which is independent of morphogen concentration. The boundary conditions are Neumann for the influx and zero outflux. The steady-state morphogen gradient is an exponential function, so this system is not scale invariant unless $D$, $k$, and the influx vary with the tissue length $L$ in a very specific manner \cite{Umulis2013}.

The simplicial representation of the SDD model is shown in Figure~\ref{fig:Fig2}(b). Note that the simplicial complex is 3-dimensional, however we only show the 1-skeleton of the simplicial complex for simplicity.

\subsection{Positional-information: Opposing gradients}
There are two mechanisms whereby two opposing morphogen gradients provide size information for developmental patterning \cite{McHale2006}: the gene expression depends on the relative concentrations of the two morphogens. One is the annihilation model, and the other is the scaling-by-opposing-gradients model.

For the opposing gradients model the sources of each morphogen are at opposite ends of the domain. The combination of the two morphogen gradients provides effective local scaling for the boundaries of individual target genes, however it does not produce global scaling across the entire domain \cite{Wolpert1969,Houchmandzadeh2005,McHale2006,Capek2019}. The degree of scaling for the opposing gradients mechanism can be increased if the morphogens are irreversibly inactivated upon binding to each other \cite{McHale2006}. Here we assume no direct interaction between the two morphogens. Mathematically, the two morphogen gradients with concentrations $m(\bm{x},t)$ and $c(\bm{x},t)$ can be modelled as
\begin{align}\label{eq:PI3}
\frac{\partial m}{\partial t} &= D_m \nabla^2 m - k_m m,\\
\frac{\partial c}{\partial t} &= D_c \nabla^2 c - k_c c,
\end{align}
where $D_m$ and $D_c$ are diffusivities, and $k_m$ and $k_c$ are degradation rates. The boundary conditions for each morphogen are Neumann at both boundaries, with flux at the source and zero flux at the opposite boundary.

The simplicial representation of the opposing gradients model is shown in Figure~\ref{fig:Fig2}(c). Note that the simplicial complex is 4-dimensional, however we only show the 1-skeleton of the simplicial complex for simplicity.

\subsection{Positional-information: Induction-contraction}
The largest degree of scaling across the patterning domain is obtained when the biophysical properties of the morphogen are influenced by an accessory modulator that senses the size of the domain \cite{Wartlick2009,Ben_Zvi2010,Umulis2013,Capek2019}. The modulator molecules with concentration $c(\bm{x},t)$ may influence the diffusivity, degradation rate, or influx rate, of the morphogen with concentration $m(\bm{x},t)$. If the kinetics and source of the modulator are dependent on the morphogen concentration then the mechanism is \emph{active modulation}. An example of active modulation is the \emph{induction-contraction} model, which scales globally \cite{Rahimi2016,Shilo2017}. In this case, the morphogen induces the production of a contractor molecule, here the modulator with high diffusivity once again, which contracts the range of the morphogen gradient through a decrease in the morphogen diffusivity or an increase of the morphogen degradation rate. Note that while the amplitude and shape of the morphogen gradient is globally scale-invariant, the same is not necessarily true of the modulator since the modulator level reflects the domain size and therefore increases or decreases accordingly.

The scaling by modulation model with the induction-contraction mechanism, which can be represented mathematically as
\begin{align}\label{eq:PI5}
\frac{\partial m}{\partial t} &= D_m (c) \nabla^2 m - k_m (c) m,\\
\frac{\partial c}{\partial t} &= D_c \nabla^2 c - k_c c + \rho (m),
\end{align}
where $D_m (c)$ and $D_c$ are diffusivities, $k_m (c)$ and $k_c$ are degradation rates, and $\rho (m)$ is the localised production source for the modulator. Note that $D_m (c)$ is a decreasing function of $c$, $k_m (c)$ is an increasing function of $c$, and $\rho (m)$ is an increasing function of $m$. Further note that `degradation' in this context refers to not only physical destruction, but to any mechanism that effects the removal of the morphogen from the patterning system, including irreversible complex formation. Morphogen boundary conditions are Neumann for the influx and zero flux at the opposite end of the domain. For the modulator there are various possibilities for the boundary conditions, such as the same type of boundary conditions as the morphogen when the modulator source is outside the domain or, as we consider here, zero flux at both boundaries when the modulator source is within the domain.

\subsection{Positional-information: annihilation model}
The annihilation model, and also the opposing gradients model are mechanisms whereby two opposing morphogen gradients provide size information for developmental patterning. In the annihilation model, the target gene responds to the concentration of Morphogen 1, to which Morphogen 2 irreversibly binds and thereby inhibits the action of Morphogen 1 on activity of transcription, so that the gradient of Morphogen 2 provides size information to the concentration field of Morphogen 1 \cite{McHale2006}.

The sources of each morphogen are at opposite ends of the domain, and the morphogens interact by an annihilation reaction with rate $k$ that results in global scale-invariant patterning \cite{Ben_Naim1992,McHale2006}. Mathematically, the two morphogen gradients with concentrations $m(\bf{x},t)$ and $c(\bf{x},t)$ can be modelled as
\begin{align}\label{eq:PI4}
\frac{\partial m}{\partial t} &= D_m \nabla^2 m - k_m m - kmc,\\
\frac{\partial c}{\partial t} &= D_c \nabla^2 c - k_c c - kmc,
\end{align}
where $D_m$ and $D_c$ are diffusivities, and $k_m$ and $k_c$ are degradation rates. The boundary conditions for each morphogen are Neumann at both boundaries, with influx at the source and zero flux at the opposite boundary.

The simplicial representation of the annihilation mode is shown in Figure~\ref{fig:Fig3}(a). Note that the simplicial complex is 4-dimensional, however we only show the 1-skeleton of the simplicial complex for simplicity.

\subsection{Turing-pattern: Activator-inhibitor model}
The activator-inhibitor system \cite{Gierer1972,Meinhardt2012,Landge2020} consists of two diffusible morphogens, an autocatalytic activator with concentration $m(\bf{x},t)$ and a rapidly diffusing inhibitor with concentration $c(\bf{x},t)$. This model can be represented mathematically as
\begin{align}\label{eq:T1}
\frac{\partial m}{\partial t} &= D_m \nabla^2 m + \frac{\rho m^2}{c(1+\mu_m m^2)} - k_m m + \rho_m,\\
\frac{\partial c}{\partial t} &= D_c \nabla^2 c + \rho m^2 - k_c c + \rho_c,
\end{align}
where $D_m$ and $D_c$ are diffusivities, $\rho_m$ and $\rho_c$ are basal production rates, $k_m$ and $k_c$ are degradation rates, and $\mu_m$ is a saturation constant. The parameter $\rho$ is the \emph{source density}, which measures the general ability of the cells to perform the autocatalytic reaction. The patterning arises through local self-enhancement of the activator, activation of the inhibitor, and long-range inhibition of the activator. We assume that the boundary conditions are zero flux at both boundaries.

The simplicial representation of the activator-inhibitor model is shown in Figure~\ref{fig:Fig3}(b). Note that the simplicial complex is 5-dimensional, however we only show the 1-skeleton of the simplicial complex for simplicity.

\subsection{Turing-pattern: Substrate depletion}
While long-range inhibition of an autocatalytic activator can occur with an inhibitor, as in the activator-inhibitor model, the antagonistic effect can also arise from the depletion of a substrate that is consumed during activator production. In this system, the production of an autocatalytic activator with concentration $m(\bm{x},t)$ results in either direct or indirect depletion of a substrate with concentration $s(\bm{x},t)$ \cite{Meinhardt2012}. This model can be represented mathematically as
\begin{align}\label{eq:T2}
\frac{\partial m}{\partial t} &= D_m \nabla^2 m + \rho s m^2 - k_m m + \rho_m,\\
\frac{\partial s}{\partial t} &= D_s \nabla^2 s - \rho s m^2 - k_s s + \rho_s,
\end{align}
where $D_m$ and $D_s$ are diffusivities, $\rho_m$ and $\rho_s$ are basal production rates, $k_m$ and $k_s$ are degradation rates, and $\rho$ is the source density for the autocatalytic reaction of the activator, similar to the activator-inhibitor model. The diffusivity $D_s$ of the substrate must be much faster than the diffusivity $D_m$ of the activator, and it is assumed that the substrate is produced uniformly throughout the domain. We assume that the boundary conditions are zero flux at both boundaries.

The simplicial representation of the substrate depletion model is shown in Figure~\ref{fig:Fig3}(c). Note that the simplicial complex is 5-dimensional, however we only
show the 1-skeleton of the simplicial complex for simplicity.

\subsection{Turing-pattern: Inhibition of an inhibition}
Dynamics analogous to the activator-inhibitor model can be realised through an inhibition of an inhibition mechanism \cite{Meinhardt2012}. In this case, two morphogens with concentrations $a(\bm{x},t)$ and $c(\bm{x},t)$ inhibit the production of each other, thereby forming a switching system in which one of the morphogens becomes fully activated similar to being autcatalytic. In order that pattern formation occur, a third morphogen with concentration $b(\bm{x},t)$ acts as a long-range signal that disrupts the indirect self-enhancement of either $a$ or $c$. Morphogen $b$ is rapidly diffusing, is produced under control of $a$, and inhibits the inhibition of $c$ production by $a$. therefore acting as inhibitor. This model can be represented mathematically as
\begin{align}\label{eq:T3}
\frac{\partial a}{\partial t} &= D_a \nabla^2 a + \frac{\rho_a}{\kappa_a + c^2} - k_a a,\\
\frac{\partial b}{\partial t} &= D_b \nabla^2 b + k_b (a-b),\\
\frac{\partial c}{\partial t} &= D_c \nabla^2 c + \frac{\rho_c}{\kappa_c + a^2/b^2} - k_c c,
\end{align}
where $D_a$, $D_b$, and $D_c$ are diffusivities, $\rho_a$ and $\rho_c$ are production rates, $k_a$ and $k_c$ are degradation rates, $k_b$ is both the rate of production and degradation of $b$, and $\kappa_a$ and $\kappa_c$ are saturation constants that limit the production rate if the concentrations of $a$ or $c$ become too low. We assume that the boundary conditions are zero flux at both boundaries.

\subsection{Turing-pattern: Modulation}
The diffusion of a morphogen may be inhibited by adsorption on negatively-charged extracellular matrix (ECM) components, resulting in modulated diffusion and a smaller effective diffusivity for the morphogen \cite{Nesterenko2017}. The corresponding modulation model \cite{Nesterenko2017} is an extension of the activator-inhibitor model \cite{Gierer1972,Meinhardt2012,Landge2020}. Our description of the modulation model is based on the version of the activator-inhibitor model described above. The modulation model consists of two diffusible morphogens, an autocatalytic activator and an inhibitor, along with available binding sites on the ECM onto which the activator adsorbs. The inhibitor does not bind to the ECM, and the free activator and the inhibitor have equal diffusivities $D$. It is assumed that autocatalysis and activation of the inhibitor by the activator occurs in both the free and adsorbed states. It is also assumed that the degradation rate of the activator is equal in both the free and adsorbed states. Free binding sites on the ECM therefore appear due to both desorption and degradation of the activator. Such modulation allows for stable dissipative morphogens gradients. Mathematically, the free activator has concentration $a (\bm{x},t)$, the bound activator has concentration $b (\bm{x},t)$, the inhibitor has concentration $c(\bm{x},t)$, and the concentration of available binding sites on the ECM is $s(\bm{x},t)$:
\begin{align}\label{eq:T4}
\frac{\partial a}{\partial t} &= D\nabla^2 a + \frac{\rho (a + b)^2}{c(1 + \mu_a (a + b)^2)} - k_a a + \rho_{a} - k_1 sa + k_{-1} b,\\
\frac{\partial c}{\partial t} &= D\nabla^2 c + \rho (a + b)^2 - k_c c + \rho_c,\\
\frac{\partial s}{\partial t} &= - k_1 sa + (k_{-1} + k_a)b,
\end{align}
where $D$ is the diffusivity, $\rho_{a}$ and $\rho_c$ are basal production rates, $k_a$ and $k_c$ are degradation rates, $\mu_a$ is the saturation constant, $\rho$ is the source density, $k_1$ is the rate of adsorption of the activator onto the ECM, and $k_{-1}$ is the rate of desorption of the activator from the ECM. We assume that the boundary conditions are zero flux at both boundaries.

\subsection{Construction of the simplicial representations}
The simplicial representations of the four Turing-pattern and five positional-information models are constructed by first determining the set of components $\C$ on which all of the the models under consideration are based. In this case the general components are: agents, namely morphogens, modulators, and substrates; reactions involving the agents, such as self-activation, activation, inhibition and annihilation; agent degradation; influx and outflux boundary conditions; agent diffusion; profile of the morphogen gradient; and scale invariance of the morphogen gradient. The ordered set of these components is shown in Figure 3 in \cite{Vittadello2021b}.

The reference complex is the simplex spanned by the complete set of components for all nine models. To construct the simplicial representation of each Turing-pattern and positional-information model, we first specify the 0-simplices, which represent the model components, and the 1-simplices, which represent direct interconnections between the components. While the 0- and 1-simplices are specified by the model, to give a combinatorial graph, for these models we obtain higher-dimensional simplices by forming cliques where possible, incrementally in dimensions 2 and higher. These higher-dimensional simplices indicate higher-dimensional interactions between the corresponding model components.

\section*{Author contributions}
S.T.V. conceived, planned, and conducted this research; S.T.V. drafted the manuscript; S.T.V. and M.P.H.S. reviewed, edited, and approved the final version.

\section*{Funding}
The authors gratefully acknowledge funding through a ``Life?'' programme grant from the Volkswagen Stiftung. M.P.H.S. is funded through the University of Melbourne Driving Research Momentum program.

\section*{Competing interests}
We declare we have no competing interests.

\section*{Acknowledgments}
We thank members of the Theoretical Systems Biology group at the University of Melbourne and Imperial College London, Heike Siebert (FU Berlin), James Briscoe (The Francis Crick Institute) and Mark Isalan (Imperial College London) for helpful discussions on Turing patterns and model discrimination.

\clearpage

\newcommand{\noop}[1]{}

\end{document}